\PassOptionsToPackage{hidelinks}{hyperref}
\documentclass[runningheads]{llncs}
\usepackage[T1]{fontenc}
\usepackage{amsmath,amssymb}
\usepackage{enumitem}
\setlist{nosep}
\usepackage{url}
\usepackage[hidelinks]{hyperref}
\usepackage[linesnumbered,ruled,vlined]{algorithm2e}
\usepackage{braket}
\usepackage{color}
\usepackage{dsfont}
\usepackage{caption}
\usepackage{float} 
\newenvironment{proofsketch}
{\noindent\textbf{Proof sketch.}}
{\hfill$\square$}

\usepackage{mathtools}
\usepackage{amsfonts}
\usepackage{graphicx}
\usepackage{textcomp}
\title{Measure Many Quantum Finite Automata on Infinite Words}
\author{
Abhisek Midya\inst{1} \and
A Baskar\inst{2}
}

\institute{
Department of Computer Science and Engineering, Dayananda Sagar University, Bengaluru, India\\
\email{abhisekmidyacse@gmail.com} \and
Department of CS \& IS, BITS Pilani, K.K. Birla Goa Campus, Goa\\
\email{abaskar@goa.bits-pilani.ac.in}}
\begin{document}

\maketitle
\begin{abstract}
We define a quantum computational model, called
\emph{Measure-Many Quantum Büchi Automata} (MMQBA), which extends
\emph{Measure-Many Quantum Finite Automata} (MMQFA) to the infinite-word
setting with a Büchi acceptance condition. In MMQBA, the quantum state
evolves through unitary transformations followed by repeated projective
measurements. An infinite word is accepted with respect to a cutpoint
$p \in (0,1]$ if (i) the run visits accepting states infinitely often,
(ii) the limiting cumulative acceptance probability is at least $p$, and
(iii) the limiting cumulative rejection probability is strictly less
than $p$. We formalize the semantics of MMQBA and establish a
language-theoretic characterization showing that MMQBA languages are
precisely of the form $\lim(L(M,p))$ for an MMQFA $M$. We further develop
a structural analysis of the non-halting subspace and prove that certain
$\omega$-languages cannot be recognized by MMQBA. Finally, we show that
the class of MMQBA-recognizable languages is not closed under union, intersection and complement.
\end{abstract}
\keywords{
Quantum finite automata, Infinite words, Büchi acceptance condition}
\section{Introduction}
Quantum finite automata (QFA) combine principles of quantum mechanics with classical finite automata and provide a theoretical model for finite-memory quantum computation. In one-way QFA models (1QFA), the input head moves only to the right and measurements may be performed either only at the end of the computation (measure-once) or after reading each input symbol (measure-many). Although several variants of QFA achieve exponential state savings compared with classical automata, many bounded-error models recognize only a proper subset of regular languages.

Büchi automata \cite{richard1962decision} are classical acceptors of
infinite words ($\omega$-words) and play a central role in the analysis
of systems with non-terminating behavior. They are widely used in
formal verification and model checking. Several related acceptance
conditions have been studied, including Muller, Rabin, and Streett
automata \cite{thomas1990automata}, as well as probabilistic variants
of $\omega$-automata \cite{baier2005recognizing,katoen2016probabilistic}. For a broad overview of quantum finite automata and related models in quantum computing, we refer the reader to the survey by Ambainis and Yakaryılmaz \cite{ambainis2021automata}.

Quantum automata over finite words have been investigated for more than two
decades; see, for example,
\cite{ambainis19981,brodsky2002characterizations,kondacs1997power,moore2000quantum}.
However, quantum automata on infinite words have received relatively limited
attention. Preliminary definitions of quantum Büchi, Streett, and Rabin
automata were introduced in~\cite{dzelme2010quantum}. In particular,
\cite{dzelme2010quantum} studied measure-once quantum Büchi automata (MOQBA)
and proposed the notion of measure-many quantum Büchi automata (MMQBA).
Their acceptance criterion requires the cumulative acceptance probability to
converge to~1. Although this implicitly ensures infinitely many accepting
visits, it does not restrict the rejection behavior of the computation.

In this paper we introduce a refined definition of measure-many quantum
Büchi automata (MMQBA). In our model, an infinite word is accepted if
(i) accepting states are visited infinitely often, (ii) the cumulative
acceptance probability converges to at least a cutpoint $p$, and
(iii) the cumulative rejection probability remains strictly below $p$.
This formulation provides a more robust semantics by preventing cases
where Büchi recurrence holds while significant rejection probability
accumulates. We first establish a limit characterization connecting MMQBA-recognizable $\omega$-languages with languages recognized by measure-many quantum finite automata (MMQFA). This characterization provides a structural connection between finite-word and infinite-word quantum automata, analogous to classical limit constructions for $\omega$-languages, while preserving the distinct infinite-computation behavior captured by MMQBA. We then prove a
non-recognizability result showing that certain languages cannot be
recognized by MMQBA with cutpoint greater than $\tfrac12$, and further
investigate structural properties of the non-halting subspace as well as
closure properties of the model. Our approach differs from the quantum Büchi automaton model of
\cite{wang2024quantum}, which is based on disturbing or non-disturbing
measurements in a measure-once setting rather than cumulative cutpoint
semantics. In contrast, our model incorporates repeated measurements and
cumulative acceptance/rejection probabilities, making it closer in spirit
to measure-many quantum finite automata and better suited for studying
limit behavior of infinite computations. Similar to classical Büchi
automata, acceptance depends on infinite recurrence behavior along
infinite runs; however, in the quantum setting this behavior is further
governed by cumulative acceptance and rejection probabilities arising
from repeated measurements.

The results developed herein contribute to the theoretical foundations of
quantum automata operating on infinite words. Such models provide a
mathematical framework for studying quantum computational processes that
evolve indefinitely under repeated measurements. Understanding these
foundations may be relevant to areas such as the formal analysis and
verification of quantum systems, the semantics and verification of quantum programs as well as quantum cryptographic protocols \cite{brunet2004dynamic,chadha2006reasoning,feng2007proof}, and the formal analysis of quantum games \cite{eisert1999quantum,gutoski2007toward}.

\par In Section~2, we recall the framework of measure-many quantum finite automata on finite words, which forms the basis for our later development. 
Section~3 presents the formal definition of MMQBA together with an example. Section 4, 5 discuss the structural properties and language characterization of MMQBA respectively. Section~6 investigates closure properties. Finally, Section~7 concludes with a summary and future research directions. 

Additional examples and detailed proofs are provided in the Appendix.

\section{Preliminaries}
\begin{definition}$MMQFA$\cite{ambainis19981}\\
$MMQFA$ is a tuple $M=(Q, \Sigma,\delta,q_0,Q_{acc},Q_{rej})$ where 
\begin{itemize}
\item
$Q$ is a finite set of states, 
\item 
$\Sigma$ is an input alphabet,
\item $\delta:Q \times \Gamma \times Q \rightarrow \mathds{C}_{[0,1]}$ is the transition function 
\item $q_0 \in Q$ is a starting state and 
\item $Q_{acc} \subseteq Q$ is a set of accepting states and 
\item $Q_{rej} \subseteq Q$ is a set of rejecting states. 
\end{itemize}
\end{definition}
The states in $Q_{acc}$ and $Q_{rej}$ are referred to as halting states and the
states in $Q_{non} = Q-(Q_{acc} \cup Q_{rej})$ are referred to as non-halting states. Let $\#, \$ \notin \Sigma$ and $\Gamma = \Sigma \cup \{\#, \$\}$ where $\#, \$$ are left and right end markers respectively.
\par Recall notation $l_2(Q)$ (the space mappings from $Q$ to $\mathbb{C}$ with $l_2$ norm), and emphasize that superpositions of M have norm 1.
. $\ket{q}$ represents the unit vector for $q \in Q$, which has a value of 1 at $q$ and $0$ elsewhere. It is possible to express every element of $l_2(Q)$ as a linear combination of vectors $\ket{q}$. We will use $\psi$ to
denote elements of $l_2(Q)$ and $S_U$ to denote the span of basis states in $U \subseteq Q$. In other words $S_U=span(\{\ket{q}: q \in U\})$.
\par We define $\delta : Q \times \Gamma \times Q \longrightarrow \mathds{C}_{[0,1]}$. The amplitude of $\ket{q_2}$ in the superposition of states to which $M$ proceeds from $\ket{q_1}$ after reading $\sigma$ is represented by the value $\delta(q_1, \sigma, q_2)$. For $\sigma \in \Gamma$,
$V_{\sigma}$ is a linear transformation on $l_2(Q)$ defined by
$$V_{\sigma}(\ket{q_1})=\sum_{q_2\in Q}\delta(q_1, \sigma, q_2)\ket{q_2}$$
\par We require all \( V_{\sigma} \) to be unitary\footnote{A matrix \( U \) is unitary if \( U^\dagger U = I \), where \( U^\dagger \) is the conjugate transpose of \( U \). Unitary matrices preserve the norm of quantum states.}. The computation of  $M$ starts in the superposition $\ket{q_0}$. Then transformations corresponding
to the left end marker \#, the letters$(\sigma)$ of the input word $x$ and the right end marker $\$$ are applied. The transformation corresponding to $\sigma\in \Gamma$ consists of the following steps.
\begin{itemize}
\item 
First, $V_{\sigma}$ is applied. The new superposition $\psi'$ is $V_{\sigma}(\psi)$, where $\psi$ is the superposition before this step.

\item 
Then, $\psi'$ is measured with respect to the observable 
$S_{acc} \oplus S_{rej} \oplus S_{non}$, where $\oplus$ denotes the orthogonal direct sum of subspaces. 
Here,
$S_{acc} = \mathrm{span}(\{\ket{q} : q \in Q_{acc}\})$, 
$S_{rej} = \mathrm{span}(\{\ket{q} : q \in Q_{rej}\})$, and 
$S_{non} = \mathrm{span}(\{\ket{q} : q \in Q_{non}\})$. 

The input is accepted with probability $\|P_{acc}\psi'\|^2$, rejected with probability $\|P_{rej}\psi'\|^2$, and the computation continues from $P_{non}\psi'$ with probability $\|P_{non}\psi'\|^2$. $P_p, p \in \{acc, rej, non\}$ the projection operator into the subspace
$S_p$, and and let $P_{\text{halt}} := P_{acc} + P_{rej}$ denote the projection onto the halting subspace.

Note that $P_{non}\psi'$ is generally not normalized; its squared norm represents the remaining non-halting probability.
\end{itemize}
In order to define formally the overall probability with which an input is accepted (rejected)
by a $MMQFA$ M, we define the set $V_M = l_2(Q) \times C \times C$ of so-called “total states” of $M$, that
will be used only with the following interpretation. $M$ is at any time during the computation
in the state $(\psi, p_{a}, p_{r})$ if so far in its computation $M$ accepted the input with probability $p_{a}$,
rejected with probability $p_{r}$ and neither with probability $1-p_{a}-p_{r} = ||\psi||^2$, and $\ket{\psi}$ is its  
current, unnormalized state. For each $\sigma \in \Gamma$ the evolution of $M$, with respect to the total
state, on an input $\sigma$ is given by the operator $T_{\sigma}$ defined in the following way:
$$T_{\sigma}(\psi,p_{a},p_{r})\longrightarrow (P_{non}V_{\sigma}\psi,p_{a}+||P_{acc}V_{\sigma}\psi||^2,p_{r}+||P_{rej}V_{\sigma}\psi||^2)$$. \\
For $x=\sigma_1\sigma_2 \ldots \sigma_n \in \Sigma^*$ let $T_{\#x\$}=T_{\$}T_{\sigma_n}T_{\sigma_{n-1}}\ldots T_{\sigma_{1}}T_{\#}$. If $T_{\#x\$}(\ket{q_0}, 0, 0) = (\psi, p_{a}, p_{r})$, then we say that $M$ accepts $x$ with probability $p_a$ and rejects with probability $p_r$.  
We say that $L(M,p)$ is the language accepted by $M$ with cutpoint $p$, that is, $L(M,p) := \{\, x \in \Sigma^* \mid \text{$M$ accepts $x$ with probability at least $p$} \,\}.$
\section{Measure Many Quantum Büchi Automata (MMQBA)} 
\begin{definition}
$MMQBA$ is a tuple $M=(Q, \Sigma,\delta,q_0,Q_{acc},Q_{rej})$ where
\begin{itemize}
\item
$Q$ is a finite set of states, 
\item 
$\Sigma$ is an input alphabet,
\item $q_0 \in Q$ is a starting state, 
\item $Q_{acc} \subseteq Q$ is a set of accepting states,
\item $Q_{rej} \subseteq Q$ is a set of
rejecting states,
\item $\delta:Q \times \Gamma \times Q \rightarrow \mathds{C}_{[0,1]}$ is the transition function 
\end{itemize}
The state space of $M$ is the Hilbert space $\mathbb{C}^Q$ with basis $\{\ket{q} : q \in Q\}$.
\end{definition}

Let the left end marker $\#$ does not belong to $\Sigma$ and $\Gamma = \Sigma \cup \{\#\}$. 
Let  $Q_{halt}=Q_{acc}\cup Q_{rej}$ and $|Q_{halt}|=n$. 

A computation of $M$ on the input word $X=\#\sigma_1\sigma_2\dots\in \Gamma^{\omega}$ proceeds as follows: We require all $V_{\sigma_i}$ to be unitary.
The computation of $M$ starts in the superposition $\ket{q_0}$. A unitary transition corresponding to the current input symbol is performed. After every transition, $M$ measures its state with respect to the observable as follows: 
$$\bigoplus_{i=1}^{n}S_i \oplus S_{non} \; \text{where}\; S_i=span\{\ket{q_i}\}:q_i\in Q_{halt} \;\text{and}\;S_{non}=span\{\ket{q}:q\notin Q_{halt}\}$$ After each unitary transition, a measurement is performed. If the resulting state lies in some halting subspace \( S_i \), corresponding to \( q_i \in Q_{\text{halt}} \), then the computation halts. Otherwise, the state lies in $S_{\text{non}}$, and the computation proceeds to the next input symbol from the resulting (generally unnormalized) state in $S_{\text{non}}$.

The computation of $M$ on input $x=\sigma_1\sigma_2\sigma_3\cdots \in \Sigma^\omega$
starting from an arbitrary initial state $\psi_0 \in S_{\text{non}}$
is the sequence
$\psi_0,\ \psi_1,\ \psi'_1,\ \psi_2,\ \psi'_2,\ldots$
defined as follows.

First define the post-measurement state at step $0$ by
$\psi'_0 := P_{\text{non}}(\psi_0)$.
Since $\psi_0 \in S_{\text{non}}$, we have $\psi'_0 = \psi_0$. For every $j \ge 1$, define: $\psi_j := V_{\sigma_j}(\psi'_{j-1})$ (pre-measurement state), $\psi'_j := P_{\text{non}}(\psi_j)$ (post-measurement non-halting state).
We now define how an infinite word is recognized by an $\text{MMQBA}$ $M$ for a fixed cutpoint $p\in (0,1)$.

At any time during the computation $M$ is in the total state
$(\psi, acc, rej)$, where $\psi$ is a superposition
of non-halting states, $acc$ is the cumulative acceptance
probability, and $rej$ is the cumulative rejection probability.
For each $\sigma \in \Gamma$, the evolution on input $\sigma$ is given
by the operator $T_{\sigma}$ defined as
$$
T_{\sigma}(\psi, acc, rej)
\longrightarrow
\big(
P_{\text{non}}V_{\sigma}\psi,\;
acc + \|P_{acc}V_{\sigma}\psi\|^2,\;
rej + \|P_{rej}V_{\sigma}\psi\|^2
\big),
$$
where $P_{acc}$, $P_{rej}$, and $P_{\text{non}}$
are the projections onto the accepting, rejecting, and non-halting
subspaces, respectively, and $P_{\text{halt}} := P_{acc} + P_{rej}$ denote the projection onto the halting subspace.

Let $x = \#\sigma_1\sigma_2\dots \in \Gamma^{\omega}$ be an input word with the left end marker $\#$, where  $\Gamma = \Sigma \cup \{\#\}$, and let $x_j$ denote its prefix of length $j$ over $\Sigma$. The total state after reading $x_j$ is
$T_{\#x_j}(\ket{q_0},0,0)
=
(\ket{\psi'_j}, acc_j, rej_j)$
where $acc_j$ and $rej_j$ denote the cumulative
acceptance and rejection probabilities up to step $j$, respectively,
and $\ket{\psi'_j}$ is the post-measurement non-halting state after
applying $P_{\text{non}}$. The cumulative acceptance and rejection
probabilities are defined by
$
acc_j = \sum_{i=1}^j \|P_{acc}(\psi_i)\|^2$ and 
$rej_j = \sum_{i=1}^j \|P_{rej}(\psi_i)\|^2$.

The $\text{MMQBA}$ $M$ accepts $x$ with probability $p$ if the run of $M$ on $x$ satisfies all of the following:
\begin{description}
    \item[Büchi condition:] $\|P_{acc}(\psi_j)\|^2 > 0 \quad \text{for infinitely many } j$. That is, $M$ visits accepting halting states infinitely often.
    \item[Cutpoint acceptance:] $\lim_{j \to \infty} acc_j \;\ge\; p$. The cumulative acceptance probability in the run converges to the cutpoint $p$.
    \item[Cutpoint rejection:] $ \lim_{j \to \infty} rej_j \;<\; p$.
    The cumulative rejection probability in the run remains strictly below the cutpoint $p$.
\end{description}

The $\omega$-language recognized by $M$ with cutpoint $p$, denoted by $L(M,p)$, is the set of all
infinite words accepted by $M$ with probability at least $p$. Formally,
$L(M,p) := \{\, x \in \Sigma^\omega \mid \text{$M$ accepts $x$ with probability at least $p$} \,\}$.
\begin{remark}
Let $M$ be an MMQBA and
$x = x_1 x_2 x_3 \cdots \in \Sigma^\omega$ an input word.
Let $\psi_i$ denote the pre-measurement state after reading the $i$-th symbol. Then the cumulative acceptance probabilities satisfy ${acc}_{i+1} > {acc}_i \quad \text{iff} \quad P_{{acc}}(\psi_{i+1}) \neq 0.$
\end{remark}
We illustrate the model by the following example.

\begin{example}\label{example-fixed}
Let $L = a(a+b)^\omega$, $\Sigma=\{a,b\}$. \\
We define the MMQBA $M = (Q,\Sigma,\delta,q_0,Q_{acc},Q_{rej})$
where
$Q=\{q_0,q_1,q_2\},
Q_{acc}=\{q_1\},
Q_{rej}=\{q_2\}$, cumulative acceptance probability is $0.98$
and $q_0$ is the initial state.

We use the basis order $(q_0,q_1,q_2)$ and define the unitary matrices by giving their columns (the $j$-th column is $V_\sigma\ket{q_{j-1}}$ for the corresponding symbol). The matrices are:

\[
V_a =
\begin{pmatrix}
\frac{1}{\sqrt{3}} & \frac{\sqrt{2}}{\sqrt{3}} & 0 \\[6pt]
-\frac{\sqrt{2}}{\sqrt{3}} & \frac{1}{\sqrt{3}} & 0 \\[6pt]
0 & 0 & 1
\end{pmatrix},
\qquad
V_b =
\begin{pmatrix}
\frac{1}{3} & \frac{2}{3} & \frac{2}{3} \\[6pt]
\frac{2}{3} & \frac{1}{3} & -\frac{2}{3} \\[6pt]
\frac{2}{3} & -\frac{2}{3} & \frac{1}{3}
\end{pmatrix}.
\]
Let the input word $x$ be $aaabbb \dots$ and $\psi_0=\ket{q_0}$
\begin{itemize}
\item
$V_a |q_0\rangle = \tfrac{1}{\sqrt{3}}\ket{q_0} - \tfrac{\sqrt{2}}{\sqrt{3}}\ket{q_1}
=\psi_1$ where $x_1=a,\; acc_1=\tfrac{2}{3},\; rej_1=0$.

\item $V_{a}(\psi'_1) = \tfrac{1}{3}\ket{q_0} - \tfrac{\sqrt{2}}{3}\ket{q_1}=\psi_2$  
where $x_2=aa,\; acc_2=\tfrac{2}{3}+\tfrac{2}{9},\; rej_2=0,\; 
\psi'_1 = P_{\text{non}}(\psi_1) = \tfrac{1}{\sqrt{3}}\ket{q_0}$.

\item $V_{a}(\psi'_2) = \tfrac{1}{3\cdot\sqrt{3}}\ket{q_0} - \tfrac{\sqrt{2}}{3\cdot\sqrt{3}}\ket{q_1}=\psi_3$  
where $x_3=aaa,\; acc_3=\tfrac{2}{3}+\tfrac{2}{9}+\tfrac{2}{27},\; rej_3=0,\;
\psi'_2 = P_{\text{non}}(\psi_2)=\tfrac{1}{{3}}\ket{q_0}$.

\item $V_{b}(\psi'_3) = \tfrac{1}{9\cdot\sqrt{3}}\ket{q_0}+\tfrac{2}{9\cdot\sqrt{3}}\ket{q_1}+\tfrac{2}{9\cdot\sqrt{3}}\ket{q_2}=\psi_4$  
where $x_4=aaab,\; acc_4=\tfrac{2}{3}+\tfrac{2}{9}+\tfrac{2}{27}+\tfrac{4}{81\cdot 3},\; rej_4=\tfrac{4}{81\cdot 3},\; \psi'_3 = P_{\text{non}}(\psi_3)=\tfrac{1}{{3\sqrt{3}}}\ket{q_0}$.

\item $V_{b}(\psi'_4) = \tfrac{1}{27.\sqrt{3}}\ket{q_0}+\tfrac{2}{27.\sqrt{3}}\ket{q_1}+\tfrac{2}{27.\sqrt{3}}\ket{q_2}=\psi_5$  
where $x_4=aaabb,\; acc_5=\tfrac{2}{3}+\tfrac{2}{9}+\tfrac{2}{27}+\tfrac{4}{81\cdot 3}+\tfrac{4}{27\cdot 27\cdot 3},\; rej_5=\tfrac{4}{81\cdot 3}+\tfrac{4}{27\cdot 27\cdot 3},\; \psi'_4 = P_{\text{non}}(\psi_4)=\tfrac{1}{{27\sqrt{3}}}\ket{q_0}$.

\item In general, at step $j \geq 1$ the cumulative acceptance probability satisfies
$$
acc_j =
\begin{cases}
\displaystyle \frac{2}{3}, & j=1, \\[8pt]
\displaystyle \frac{2}{3} + \frac{2}{9} + \cdots + \frac{2}{3^j}, & 1 \le j \le 3, \\[10pt]
\displaystyle \frac{2}{3} + \frac{2}{9} + \frac{2}{27} \;+\;
\sum_{k=4}^j \frac{4}{3 \cdot 9^{\,k-2}}, & j \ge 4,
\end{cases}
$$ 
$$
\text{ and }rej_j =
\begin{cases}
0, & 1 \le j \le 3, \\[8pt]
\displaystyle \sum_{k=4}^j \frac{4}{3 \cdot 9^{\,k-2}}, & j \ge 4.
\end{cases}
$$
$M$ reads infinite prefixes of $x$ and reach the accepting and rejecting states $q_1,q_2$ infinitely many times respectively: Since this is a geometric series:
$$
\lim_{j \to \infty} acc_j = acc_{\infty}
= \frac{2}{3} + \frac{2}{9} + \frac{2}{27}
+ \cdots=\frac{\tfrac{4}{243}}{1 - \tfrac{1}{9}}
= \frac{26}{27} + \frac{1}{54}
= \frac{53}{54} \approx 0.98148
$$
$$
\text{ and }\lim_{j \to \infty} rej_j = rej_{\infty}
=\frac{1}{54} \approx 0.01852
$$
\end{itemize}
Let the input word $x$ be $bbb\cdots$. Now we illustrate rejection.
$V_b |q_0\rangle = \tfrac{1}{3}\ket{q_0}+\tfrac{2}{3}\ket{q_1}+\tfrac{2}{3}\ket{q_2}$, 
so $acc_1=\tfrac{4}{9}, rej_1=\tfrac{4}{9}, \psi_1=P_{\text{non}}(\psi)=\tfrac{1}{3}|q_0\rangle$.
This pattern repeats, giving $\lim_{j\to\infty} acc_j=\tfrac{1}{2},
\lim_{j\to\infty} rej_j=\tfrac{1}{2}$. Hence $b^\omega$ is rejected for any cutpoint $p>\tfrac12$.
\end{example}
We briefly explain the behavior of $M$. For any input $x \in L = a(a+b)^\omega$, 
the first symbol $a$ ensures that a positive amount of probability is transferred 
to the accepting subspace $S_{acc}$. Subsequent transitions maintain a bias towards 
acceptance, and the cumulative acceptance probability converges to a value 
approximately $0.98$, which is strictly greater than $\tfrac{1}{2}$.

On the other hand, for inputs $x \notin L$, this acceptance bias is not established, 
and the cumulative acceptance probability does not exceed the cutpoint, while 
rejection dominates. Hence, $M$ recognizes $L$ according to the definition of MMQBA.

This example demonstrates the evolution of an MMQBA and prepares the ground 
for structural properties studied in the next sections. 

\section{Structural Properties of MMQBA}
\label{structural}
\par This section establishes structural properties of measure-many quantum Büchi automata (MMQBA) that are used in subsequent language-theoretic results, focusing on the behavior of the non-halting subspace during infinite computations. We show that this subspace admits a decomposition into components with distinct long-term behaviors, enabling a characterization of how halting probability is generated and how acceptance probability accumulates along a run. Intuitively, this parallels the classical setting where non-accepting states can be divided into those from which acceptance remains possible and those forming trap states. The quantum analogue, formalized in the following lemma, decomposes the non-halting subspace into one component where the computation evolves without ever producing halting probability, and another from which probability mass gradually leaks into the halting subspace, thereby enabling acceptance or rejection.
\begin{lemma}[Non-halting subspace decomposition]\label{subspace decomposition}
Let $M$ be an MMQBA with state space $\mathbb{C}^Q$, and let
$S_{\text{non}}$ denote the non-halting subspace.
Then there exist subspaces $S^1, S^2 \subseteq S_{\text{non}}$
such that $S_{\text{non}} = S^1 \oplus S^2$ and the following properties hold. For any $\psi \in S_{\text{non}}$ and any infinite word
$x=\sigma_1\sigma_2\cdots \in \Sigma^\omega$, let
$\psi_0,\psi_1,\psi'_1,\psi_2,\psi'_2,\ldots$
denote the computation of $M$ on input $x$ starting from $\psi$.

\begin{enumerate}
\item For every $\psi \in S^1$, the computation never leaves
$S^1$, and no halting probability is ever produced. Formally,
$\forall k \ge 1,\ \psi'_k \in S^1$ and
$\sum_{k=1}^{\infty} \|P_{\text{halt}}(\psi_k)\|^2 = 0$.
\item For every $\psi \in S^2$, the non-halting component converges to
zero. Formally,
$\lim_{k\to\infty} \|\psi'_k\| = 0$.
\end{enumerate}
\end{lemma}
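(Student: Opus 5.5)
The plan follows the classical decomposition of Kondacs and Watrous for measure-many automata. Part 1 should go through for a single, word-independent decomposition. Part 2, however, fails if the decomposition must work for every infinite word, so the statement has to be weakened before it can be proved. I first give the construction, then the counterexample, then a weakened version of part 2 that I expect to be provable.

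For $\sigma\in\Sigma$ let $A_\sigma := P_{\text{non}}V_\sigma$ restricted to $S_{\text{non}}$. Each $A_\sigma$ is a contraction, since $\|A_\sigma\psi\|\le\|V_\sigma\psi\|=\|\psi\|$. For a finite word $w=\tau_1\cdots\tau_m$ put $A_w:=A_{\tau_m}\cdots A_{\tau_1}$, again a contraction. Define
\[
S^1 := \{\psi\in S_{\text{non}} : \|A_w\psi\|=\|\psi\| \text{ for all } w\in\Sigma^*\},
\qquad S^2 := (S^1)^{\perp}\cap S_{\text{non}} .
\]
\textbf{$S^1$ is a subspace.} For a contraction $A$ we have $I-A^\dagger A\ge 0$. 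Hence $\|A\psi\|=\|\psi\|$ holds iff $\langle\psi,(I-A^\dagger A)\psi\rangle=0$, iff $\psi\in\ker(I-A^\dagger A)$. So $S^1$ is an intersection of subspaces.

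\textbf{Part 1.} Take $\psi\in S^1$. Then $\|A_\sigma\psi\|=\|\psi\|$ forces $P_{\text{halt}}V_\sigma\psi=0$, so $A_\sigma\psi=V_\sigma\psi$. Because $A_{w}A_\sigma=A_{\sigma w}$, the vector $A_\sigma\psi$ again lies in $S^1$. Consequently $S^1$ is invariant under every $A_\sigma$, and no halting probability is ever produced, which is exactly part 1.

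\textbf{$S^2$ is invariant.} Each $V_\sigma$ is injective on the finite-dimensional space $S^1$ and maps it into itself, so $V_\sigma S^1=S^1$. Let $\varphi\in S^2$ and $\chi\in S^1$, and write $\chi=V_\sigma\chi''$ with $\chi''\in S^1$. Then
\[
\langle A_\sigma\varphi,\chi\rangle=\langle V_\sigma\varphi,\chi\rangle=\langle\varphi,\chi''\rangle=0 .
\]
So each $A_\sigma$ maps $S^2$ into $S^2$, and $S_{\text{non}}=S^1\oplus S^2$ is a decomposition into two $A_\sigma$-invariant pieces.

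\textbf{Part 2 fails as stated.} Take $V_a=I$, and let $V_b$ send the non-halting state $q_0$ partly into $Q_{\text{halt}}$. Then $\ket{q_0}\notin S^1$, so it has a nonzero component in $S^2$, yet on $a^\omega$ that component keeps norm $1$ forever. No word-independent decomposition can satisfy part 2 for every $x$.

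\textbf{Weakened part 2 for a single letter.} Fix a letter $\sigma$ (or a single operator, e.g.\ the product along one period of an ultimately periodic word), and let $S^1_\sigma$, $S^2_\sigma$ be the analogous subspaces for $A=A_\sigma$ alone. Let $A$ now denote $A_\sigma$ restricted to $S^2_\sigma$. I would show its spectral radius is $<1$.
\begin{itemize}
\item Any eigenvalue $\lambda$ of $A$ satisfies $|\lambda|\le 1$.
\item If $|\lambda|=1$ with eigenvector $v$, then $\|Av\|=\|v\|$. This gives $Av=V_\sigma v$, and iterating puts $v$ in the norm-preserving subspace $S^1_\sigma$, contradicting $v\in S^2_\sigma$.
\item Hence $\rho(A)<1$, so $\|A^k\|\to 0$ by Gelfand's formula, which gives $\|\psi'_k\|\to 0$ for $\psi\in S^2_\sigma$.
\end{itemize}
For the periodic case, apply this to the period product and then bound the at most period-many intermediate contractions by $1$.

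\textbf{Main obstacle.} Extending part 2 to a fixed but arbitrary (aperiodic) word is the hard step. The time-inhomogeneous products $A_{\sigma_k}\cdots A_{\sigma_1}$ have no spectral radius to control. I would try a compactness argument on the unit sphere of a word-dependent $S^2_x$: find a uniform per-block decay factor $c<1$, using continuity of $\varphi\mapsto\|A_w\varphi\|$. What has to be checked there is that the blocks of $x$ actually leak a definite fraction of the mass, and that, rather than Part 1, is where I expect the argument to be delicate.
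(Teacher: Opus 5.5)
Your proposal is correct, and on Part~2 it is the paper's proof that breaks, not yours. The paper constructs the same $S^1$ you do. Its descending chain $\mathcal{W}_0=S_{\text{non}}\supseteq\mathcal{W}_1\supseteq\cdots$ stabilizes at the largest subspace of $S_{\text{non}}$ that every $V_\sigma$ maps into itself. That subspace coincides with your $\{\psi:\|A_w\psi\|=\|\psi\|\ \forall w\}$, and Part~1 is proved by the same induction. For Part~2 the paper sets $S^2:=S_{\text{non}}\cap(S^1)^\perp$ and telescopes to $\sum_i\|P_{\text{halt}}(\psi_i)\|^2=\|\psi_0\|^2-\lim_k\|\psi'_k\|^2$. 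It then asserts that a positive limit ``contradicts the defining property of $S^2$, which requires that every nonzero vector in $S^2$ contributes nontrivially to the halting amplitude over time.'' That property is never established, and it is false. Maximality of $S^1$ only guarantees that a nonzero $\varphi\in S^2$ loses norm along \emph{some} word $w$, not along the given $x$.

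Your example with $V_a=I$ and input $a^\omega$ is exactly the case where the leaking letter is never read. Taking $Q_{\text{non}}=\{q_0\}$ makes it rule out every decomposition, since Part~1 then forces $S^1=\{0\}$. Beyond what the paper does, you also prove that $S^2$ is $A_\sigma$-invariant (via $V_\sigma S^1=S^1$). You also give the correct single-operator substitute through the spectral-radius argument; that is the Ambainis--Freivalds $E_1\oplus E_2$ lemma the statement appears to be modelled on.

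One correction to your plan: abandon the extension to arbitrary words in your last paragraph. It fails even for one fixed, ultimately periodic word with a word-dependent decomposition. Let $Q=\{q_0,q_1\}$ with $q_1$ accepting, $V_a=I$, $V_b\ket{q_0}=(\ket{q_0}+\ket{q_1})/\sqrt2$, and $x=ba^\omega$. Since $S_{\text{non}}$ is one-dimensional, there are only two choices. Either $S^1=S_{\text{non}}$, which violates Part~1 because half the mass halts at step~$1$. Or $S^2=S_{\text{non}}$, which violates Part~2 because $\|\psi'_k\|=1/\sqrt2$ for all $k\ge1$. So the uniform per-block leakage you flagged as delicate genuinely can fail. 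The provable form is what you already have: word-independent Part~1, plus decay under a single operator $A_\sigma$, or under the period operator $A_w$ applied from the start of a purely periodic tail.
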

Proof is omitted due to space constraint.

Note that if the initial state were contained
entirely in $S^1$, then by Lemma~\ref{subspace decomposition} the
computation would remain inside $S^1$ and no halting probability would
ever be generated. Consequently the automaton would accept no word,
i.e., $L(M,p)=\emptyset$.
\begin{corollary}\label{corollary11}
Let $\psi = \phi_1 + \phi_2$, where $\phi_1 \in S^1$ and $\phi_2 \in S^2$.
Let the run of $M$ on $x = \sigma_1 \sigma_2 \sigma_3 \cdots$ be defined as in an MMQBA, and let $\psi_k$ denote the pre-measurement state at step $k$. Then, for all $k \ge 1$, $\| P_{\text{halt}}(\psi_k) \|^2
=
\| P_{\text{halt}}(V_{\sigma_k} P_{\text{non}} V_{\sigma_{k-1}} \cdots
P_{\text{non}} V_{\sigma_1}(\phi_2)) \|^2.$
\end{corollary}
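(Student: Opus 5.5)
We prove the identity by linearity of the non-halting evolution together with part~1 of Lemma~\ref{subspace decomposition}. The first step is to write the pre-measurement state at step $k$ as an operator applied to the initial state:
\[
\psi_k = V_{\sigma_k} P_{\text{non}} V_{\sigma_{k-1}} \cdots P_{\text{non}} V_{\sigma_1}(\psi) =: A_k \psi .
\]
This follows by induction from $\psi_j = V_{\sigma_j}\psi'_{j-1}$ and $\psi'_{j}=P_{\text{non}}\psi_j$, using $\psi'_0=\psi$ since $\psi\in S_{\text{non}}$. Each $A_k$ is a composition of unitaries and orthogonal projections, so it is linear. Hence
\[
\psi_k = A_k\phi_1 + A_k\phi_2 ,
\]
where $A_k\phi_1$ and $A_k\phi_2$ are the pre-measurement states at step $k$ of the runs started from $\phi_1$ and from $\phi_2$ separately. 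Both runs are well defined because $\phi_1,\phi_2\in S_{\text{non}}$.

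The second step uses Lemma~\ref{subspace decomposition}(1) for the run from $\phi_1\in S^1$. Its total halting mass $\sum_k \|P_{\text{halt}}(A_k\phi_1)\|^2$ equals $0$, so every term vanishes and $P_{\text{halt}}A_k\phi_1=0$ for all $k\ge 1$. If $\phi_1$ is not normalized, this still holds by rescaling, since $A_k$ is linear; the case $\phi_1=0$ is trivial. Applying the linear map $P_{\text{halt}}$ to the decomposition of $\psi_k$ then gives
\[
P_{\text{halt}}\psi_k = P_{\text{halt}}A_k\phi_1 + P_{\text{halt}}A_k\phi_2 = P_{\text{halt}}A_k\phi_2 .
\]
Taking squared norms yields the claimed equality. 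The argument is an equality of vectors, so no cross terms arise.

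The main obstacle is how Lemma~\ref{subspace decomposition}(1) is phrased. It is stated for the computation starting from a state $\psi\in S^1$, which I would read as holding for arbitrary, not necessarily unit, vectors of $S^1$. If it is only stated for unit vectors, the rescaling argument above covers the general case. I would also note that the statement of part~1 (zero halting mass) is exactly what is needed. In particular, we never need the invariance $\psi'_k\in S^1$, nor any orthogonality between the trajectories of $\phi_1$ and $\phi_2$ at later steps.
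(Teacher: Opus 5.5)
Your proof is correct and is essentially the paper's argument. In both, linearity of $V_{\sigma_k}P_{\text{non}}\cdots P_{\text{non}}V_{\sigma_1}$ splits $P_{\text{halt}}(\psi_k)$ into a $\phi_1$-part and a $\phi_2$-part, and part~1 of Lemma~\ref{subspace decomposition} makes the $\phi_1$-part vanish; you are slightly more careful than the paper in noting that zero total halting mass forces each term to be zero and that the cancellation is an equality of vectors, so no cross terms arise.
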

Proof is omitted due to space constraint.

\begin{theorem}\label{convergence}
Let $x = \sigma_1 \sigma_2 \cdots \in \Sigma^\omega$ and $M$ be an MMQBA. If
    $\|P_{acc}(\psi_j)\|^2 > 0$ for infinitely many $j$ and
    $\sum_{j=1}^\infty \|P_{rej}(\psi_j)\|^2 = 0$,
then the cumulative acceptance probability converges to 1, that is $\lim_{j \to \infty} \text{acc}_j=1$.
\end{theorem}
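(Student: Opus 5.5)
The approach is to use the conservation law of the total state together with Lemma~\ref{subspace decomposition}. Since every $V_\sigma$ is unitary and $P_{acc}+P_{rej}+P_{\text{non}}=I$ with mutually orthogonal ranges, induction on $j$ gives
\[
acc_j + rej_j + \|\psi'_j\|^2 = 1 \qquad\text{for all } j .
\]
By hypothesis $\sum_j \|P_{rej}(\psi_j)\|^2=0$, so $rej_j=0$ for every $j$, and hence $acc_j = 1-\|\psi'_j\|^2$. The statement therefore reduces to showing $\|\psi'_j\|\to 0$.

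To control $\|\psi'_j\|$, I would write the post-$\#$ non-halting state as $\phi_1+\phi_2$ with $\phi_1\in S^1$ and $\phi_2\in S^2$. Each map $P_{\text{non}}V_\sigma$ is linear, so $\psi'_j=\phi_1^{(j)}+\phi_2^{(j)}$, where $\phi_i^{(j)}$ is the evolution of $\phi_i$ alone.
\begin{itemize}
\item By part~1 of the lemma, $\phi_1^{(j)}\in S^1$ and no halting mass is ever produced from it. Unitarity then gives $\|\phi_1^{(j)}\|=\|\phi_1\|$.
\item By part~2, $\|\phi_2^{(j)}\|\to 0$.
\end{itemize}
Consequently $\|\psi'_j\|^2\to\|\phi_1\|^2$, because the cross term is bounded by $2\|\phi_1\|\,\|\phi_2^{(j)}\|\to 0$. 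Corollary~\ref{corollary11} confirms the same picture: all halting probability, in particular every accepting visit, comes from $\phi_2$. I therefore obtain
\[
\lim_{j\to\infty} acc_j = 1-\|\phi_1\|^2 .
\]

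The main obstacle is that the hypotheses do not force $\phi_1=0$. The Büchi condition only says $\phi_2\neq 0$, since some accepting mass recurs, and it says nothing about $\phi_1$. A concrete check shows the statement fails as written. Take
\begin{itemize}
\item $Q=\{q_0,q_1,q_2,q_3\}$ with $Q_{acc}=\{q_3\}$ and $Q_{rej}=\emptyset$;
\item $V_\#$ sending $\ket{q_0}\mapsto(\ket{q_0}+\ket{q_1})/\sqrt2$;
\item $V_a$ fixing $\ket{q_1}$ and acting on $\mathrm{span}\{\ket{q_0},\ket{q_3}\}$ exactly as $V_a$ of Example~\ref{example-fixed} acts on $\{q_0,q_1\}$, with $q_2$ fixed as a dummy state.
\end{itemize}
On $a^\omega$ the automaton accepts at every step, never rejects, and yet $acc_j\to 1/2$.

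So the proof I would actually write establishes $\lim_j acc_j = 1-\|P_{S^1}\psi'_0\|^2$, where $P_{S^1}$ denotes the projection onto $S^1$ along the decomposition of Lemma~\ref{subspace decomposition}. It then concludes $\lim_j acc_j=1$ under the additional hypothesis that the post-$\#$ state has no $S^1$ component, i.e.\ $\psi'_0\in S^2$; this is exactly the case in which $\|\psi'_j\|\to 0$. I would state this corrected hypothesis explicitly, since without it the theorem is false by the example above.
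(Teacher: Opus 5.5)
\textbf{The refutation is correct.} Your counterexample works. In your automaton the $q_1$-part of $\psi'_0=(\ket{q_0}+\ket{q_1})/\sqrt2$ is frozen by $V_a$, while the $q_0$-part leaks $\alpha_j=3^{-j}>0$ into $q_3$ at every step. So $rej_j=0$, the B\"uchi condition holds, and $acc_j\to\tfrac12$; the theorem as stated is false.

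The paper's proof starts exactly as yours does, with $\|\psi'_j\|^2=1-acc_j-rej_j$. It then tries to get $\|\psi'_j\|\to0$ from the B\"uchi condition alone. It does this through $P(F_n)=\prod_{i<n}(1-\alpha_i)\,\alpha_n$ and the claim that $\prod_i(1-\alpha_i)=0$ whenever $\alpha_i>0$ infinitely often. That step fails in two ways:
\begin{itemize}
\item The $\alpha_i$ are already unconditional probabilities, because the $\psi'_j$ are unnormalized. Multiplying by $\prod_{i<n}(1-\alpha_i)$ therefore discounts the surviving mass a second time.
\item For $\alpha_i\in[0,1)$ the product vanishes only if $\sum_i\alpha_i=\infty$. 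Here $\sum_i\alpha_i=\lim_j acc_j\le 1$, and $\alpha_i=1$ would halt the run and rule out later accepting visits. So under the hypotheses the product is in fact always positive.
\end{itemize}
Your example is exactly the case this argument misses.

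\textbf{The gap is in your repair.} Your formula $\lim_j acc_j=1-\|P_{S^1}\psi'_0\|^2$ rests on part~2 of Lemma~\ref{subspace decomposition}, and that part is false once $|\Sigma|\ge 2$. The reason is that $S^1$ is the largest subspace that is invariant and halting-free for \emph{all} letters at once, while a given word may use only some of them. The paper's proof of part~2 simply asserts that every vector of $S^2$ eventually loses all of its mass.

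Here is a concrete counterexample to the repair:
\begin{itemize}
\item $Q_{\text{non}}=\{q_0,q_1\}$, $Q_{acc}=\{h\}$, $Q_{rej}=\emptyset$.
\item $V_a$ fixes $\ket{q_0}$ and sends $\ket{q_1}\mapsto(\ket{q_1}+\ket{h})/\sqrt2$.
\item $V_b$ does the same with $q_0$ and $q_1$ interchanged.
\item $V_\#\ket{q_0}=(\ket{q_0}+\ket{q_1})/\sqrt2$, with all maps completed to unitaries.
\end{itemize}
A vector $c_0\ket{q_0}+c_1\ket{q_1}$ emits halting amplitude $c_1/\sqrt2$ under $V_a$ and $c_0/\sqrt2$ under $V_b$. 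Hence $S^1=\{0\}$ and $\psi'_0\in S^2$. Yet on $a^\omega$ you get $\alpha_j=2^{-(j+1)}>0$, $rej_j=0$ and $acc_j\to\tfrac12$. So your corrected hypothesis $\psi'_0\in S^2$ does not yield the conclusion either.

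Your argument is sound for a one-letter alphabet. There $S^1$ coincides with the unitary part of the contraction $P_{\text{non}}V_a$, and $(P_{\text{non}}V_a)^k\to 0$ on its orthogonal complement in $S_{\text{non}}$. The same holds for a periodic word $w^\omega$ if $S^1$ is replaced by the unitary part of the one-period contraction. For general words, all that survives is $\lim_j acc_j=1-\lim_j\|\psi'_j\|^2$, so $acc_j\to1$ holds exactly when $\|\psi'_j\|\to0$.
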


\begin{proofsketch}

We show that the cumulative probability of accepting prefixes converges to 1. Let $\psi_0 = |q_0\rangle$ be the initial quantum state. We recall the pre-measurement state as $\psi_j$ and the
post-measurement non-halting state as $\psi'_j$. The halting probabilities
at step $j$ are defined by
$$
\alpha_j := \|P_{acc}(\psi_j)\|^2,
\;
\rho_j := \|P_{rej}(\psi_j)\|^2,\;
acc_j := \sum_{i=1}^j \alpha_i,
\;
rej_j := \sum_{i=1}^j \rho_i
$$
For the first step, since $\psi_0 \in S_{\text{non}}$, $\|\psi'_1\|^2 = \|\psi_0\|^2 - \alpha_1 - \rho_1.$ For every $j \ge 2$, the evolution satisfies the norm conservation identity
$\|\psi'_j\|^2=\|\psi'_{j-1}\|^2
- \alpha_j -\rho_j$. 
By induction on $j$, it follows that
$
\|\psi'_j\|^2
=
1
-
acc_j
-
rej_j$.

We are given: (i) \( \alpha_j > 0 \) for infinitely many \( j \in \mathbb{N} \) (i.e., accepting states are visited infinitely often), and (ii) \( \sum_{j=1}^\infty \rho_j = 0 \) (i.e., no amplitude is ever lost to rejecting states). From (ii) it follows that \( \rho_j = 0 \) for all \( j \), so \( \text{rej}_j = 0 \), and hence \( \|\psi'_j\|^2 = 1 - \text{acc}_j \).

The total acceptance probability is distributed across infinitely many steps. As the computation never halts (it resumes from the non-halting subspace at each step), this mass gradually accumulates in accepting states over time rather than collapsing at a finite point. Thus, the cumulative acceptance probability
$acc_j = \sum_{i=1}^j \alpha_i$ tends to $1$ as $j$  tends to $\infty$, i.e.,  $\lim_{j \to \infty} acc_j = 1$.
\end{proofsketch}
\begin{remark}
The converse of Theorem~\ref{convergence} does not hold.
That is, a run may accumulate total acceptance probability~$1$
without visiting accepting states infinitely often.
Indeed, consider the following MMQBA.
Let $\Sigma=\{a\}$ and define an automaton
$M=(Q,\Sigma,\delta,q_0,Q_{acc},Q_{rej})$
with $Q=\{q_0,q_1\}$,
$Q_{acc}=\{q_1\}$,
and $Q_{rej}=\emptyset$.
Define the unitary transition $V_a = \begin{bmatrix}
0 & 1 \\
1 & 0
\end{bmatrix}
$. On the input $x=a^\omega$, the automaton transitions as $V_a \ket{q_0} = \ket{q_1}$. Thus the automaton halts at the first step with acceptance probability~$1$, and the computation terminates immediately. Consequently, $
\lim_{j\to\infty} acc_j = 1$. However, the accepting state is visited only once. Hence the Büchi condition $\|P_{acc}(\psi_j)\|>0 \quad \text{for infinitely many } j$
is violated. This shows that convergence of cumulative acceptance
probability to~$1$ does not imply infinite visits to accepting states.
\end{remark}
\section{Limit Characterization and Non-recognizability of MMQBA}
\medskip
\noindent
In this section we investigate the relationship between
MMQFA and MMQBA recognizability via the limit operator.
We first establish a structural lemma concerning the behavior of
non-halting subspaces under a fixed transition,
which serves as the main technical tool for the subsequent arguments.
We then prove a limit-characterization theorem.
Finally, we introduce a class of languages whose limit
$\omega$-languages are not recognizable by MMQBA.
\begin{definition}
Let $M$ be an MMQBA with non-halting subspace $S_{\text{non}}$,
and let $\sigma \in \Sigma$. A subspace $S \subseteq S_{\text{non}}$
is called a \emph{$\sigma$-cycle subspace} of $M$ if $V_\sigma(S) \subseteq S$, that is,
$\forall \psi \in S,V_\sigma(\psi) \in S.$
\end{definition}
\begin{lemma}[No-entry into a $\sigma$-cycle subspace]\label{no}
Let $M$ be an MMQBA with basis state set $Q$ and
$S \subseteq S_{\text{non}}$ be a $\sigma$-cycle subspace.
For any basis vector $|r\rangle \in Q$ with 
$|r\rangle \notin S$, then
$P_S\!\big(V_\sigma |r\rangle\big)=0.$
\end{lemma}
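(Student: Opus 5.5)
The plan is to derive the statement from two facts: $V_\sigma$ is unitary, and the state space $\mathbb{C}^Q$ is finite-dimensional. Together these show that a $V_\sigma$-invariant subspace has a $V_\sigma$-invariant orthogonal complement. The conclusion $P_S(V_\sigma\ket{r})=0$ then follows once $\ket{r}$ is known to lie in $S^\perp$.

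\textbf{Step 1: invariance gives equality.} Since $V_\sigma$ is unitary it is injective. The restriction $V_\sigma|_S : S \to S$ is therefore an injective linear map from a finite-dimensional space to itself, hence bijective, so $V_\sigma(S)=S$. Consequently $V_\sigma^{-1}=V_\sigma^\dagger$ also maps $S$ into $S$.

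\textbf{Step 2: the complement is invariant.} Take $\psi\in S^\perp$ and any $\phi\in S$. Then
$\langle \phi, V_\sigma\psi\rangle=\langle V_\sigma^\dagger\phi,\psi\rangle=0$, because $V_\sigma^\dagger\phi\in S$ by Step 1. Hence $V_\sigma(S^\perp)\subseteq S^\perp$, and so $P_S V_\sigma\psi=0$ for every $\psi\in S^\perp$.

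\textbf{Step 3: place $\ket{r}$ in $S^\perp$.} This is the main obstacle, because the hypothesis ``$\ket{r}\notin S$'' does not by itself give $\ket{r}\perp S$. For an arbitrary subspace the claim actually fails. For example, take $S=\mathrm{span}\{(\ket{q}+\ket{q'})/\sqrt2\}$ with $q,q'\in Q_{non}$, let $V_\sigma=I$, and let $r=q$; then $\ket{r}\notin S$ but $P_S(V_\sigma\ket{r})=(\ket{q}+\ket{q'})/2\neq 0$.

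I would therefore read the lemma in the sense the paper's notation suggests: $S=S_U$ is spanned by basis states $\{\ket{q}:q\in U\}$ for some $U\subseteq Q_{non}$ (equivalently, one assumes $\ket{r}\perp S$). Under that reading, $\ket{r}\notin S$ means $r\notin U$. Then $\langle q\,|\,r\rangle=0$ for all $q\in U$, so $\ket{r}\in S_U^\perp$, and Step 2 gives $P_S(V_\sigma\ket{r})=0$.

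I would state this coordinate-subspace hypothesis explicitly in the proof and include the counterexample above to justify why it is needed.
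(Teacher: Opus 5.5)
Your proof is correct, but it argues differently from the paper.

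\textbf{The paper's route.} It works entrywise in the computational basis. Let $I$ index the basis vectors lying in $S$. The paper reads $V_\sigma(S)\subseteq S$ as $v_{s,i}=0$ for $s\notin I$, $i\in I$, so each column $i\in I$ has unit norm inside the rows of $I$. It then asserts that each row $s_0\in I$ also satisfies $\sum_{i\in I}|v_{s_0,i}|^2=1$. Hence a nonzero $v_{s_0,r}$ with $r\notin I$ would push that row's norm above $1$.

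\textbf{Your route.} You argue without coordinates. Injectivity plus finite dimension give $V_\sigma(S)=S$, hence $V_\sigma^\dagger(S)\subseteq S$, hence $V_\sigma(S^\perp)\subseteq S^\perp$. This is more general: it covers any invariant subspace and any vector orthogonal to it.

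\textbf{What your Step~1 supplies.} It provides exactly what the paper's row-sum assertion needs but does not justify. Unit column sums over $I$ give unit row sums over $I$ only after a further count. One needs $\sum_{s\in I}\sum_{i\in I}|v_{s,i}|^2=|I|$ together with each row contributing at most $1$. That is a dimension argument in disguise.

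\textbf{Your counterexample and the hypothesis.} Your counterexample is correct. The paper's proof silently relies on the same coordinate hypothesis you make explicit. It uses it when turning invariance into vanishing matrix entries. It uses it again when passing from $P_S(V_\sigma|q_r\rangle)\neq 0$ to some $v_{s_0,r}\neq 0$ with $s_0\in I$. So stating that $S=S_U$ for some $U\subseteq Q_{non}$ (or that $|r\rangle\perp S$) is the right repair.

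\textbf{What the paper's approach buys.} Once the counting step is added, the matrix argument matches the basis-support reasoning used for $B_\infty$ in the proof of Theorem~\ref{non}. There the invariant subspace is indeed a span of basis states.
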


The previous lemma formalizes a crucial structural phenomenon:
once a non-halting subspace becomes invariant under a transition
$V_\sigma$, no computation starting outside that subspace can ever
enter it. This restriction on the movement of amplitude between
subspaces will play a decisive role in proving non-recognizability
results later in this section. Proof is omitted due to space constraint.

\begin{definition}[Limit of a language]\cite{baier2008principles}
For $L \subseteq \Sigma^*$, define $\lim(L) = \{\, x \in \Sigma^\omega \mid
\text{$x$ has infinitely many prefixes in $L$} \,\}.$
\end{definition}
\begin{theorem}[limit-characterization]\label{limit-characterization}
Let $L' \subseteq \Sigma^\omega$.  
There exist an MMQBA  $M'$ and a cutpoint  $p' > \tfrac12$ such that $L' = L(M',p')
$ iff there exist an MMQFA $M$ and a cutpoint $p >\tfrac12$  such that $L' = \lim\!\bigl(L(M,p)\bigr)$
\end{theorem}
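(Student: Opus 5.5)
The plan is to prove both directions by reusing the same automaton, with only the end-marker behaviour changed, and matching the MMQBA acceptance conditions (Büchi, cutpoint acceptance, cutpoint rejection) with ``infinitely many prefixes accepted with probability at least $p$''. The basic observation is that for an MMQBA or MMQFA run without the right end marker, $acc_j$ and $rej_j$ are nondecreasing in $j$, bounded by $1$, and satisfy $acc_j+rej_j+\|\psi'_j\|^2=1$. The common limits $acc_\infty$ and $rej_\infty$ therefore exist. Moreover, since $p>\tfrac12$, the conditions $acc_\infty\ge p$ and $rej_\infty<p$ cannot conflict: $acc\ge p$ already forces $rej\le 1-p<p$. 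This is where the hypothesis $p>\tfrac12$ is used.

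($\Rightarrow$) Let $M'$ be an MMQBA with cutpoint $p'>\tfrac12$. I will build the MMQFA $M$ from the same states and the same $V_\sigma$ for $\sigma\in\Sigma\cup\{\#\}$, and take $V_\$=I$. Then $M$ accepts a finite prefix $x_j$ with probability exactly $acc_j$ of $M'$ on $x$.

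To handle the Büchi condition, I will augment the state space with a ``fresh-acceptance'' flag. The idea is to keep a copy of the accepting mass structure so that $V_\$$ routes a tiny extra amount to acceptance only when the last step produced accepting amplitude. Concretely, I would first try to choose a cutpoint $p$ slightly below $p'$, using a threshold $p=p'-\varepsilon$. If $acc_\infty\ge p'$, then $acc_j\ge p$ for all large $j$, so $x\in\lim(L(M,p))$. Conversely, if infinitely many prefixes satisfy $acc_j\ge p$, monotonicity gives $acc_\infty\ge p$.

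The mismatch between $p$ and $p'$ and the Büchi clause are exactly where this breaks. So the honest plan is to take $p=p'$ and show the two conditions coincide:
\begin{itemize}
\item If $acc_\infty\ge p'$ and accepting amplitude occurs infinitely often, then either some $acc_j\ge p'$ (and then all later ones are too), or $acc_j<p'$ for all $j$ while converging to $p'$.
\item The latter case must be ruled out, or repaired by the $\$$-transition, which flushes part of $\psi'_j$ into acceptance.
\end{itemize}
I would define $V_\$$ to map $S^2$ (from Lemma~\ref{subspace decomposition}) partially into $S_{acc}$, so that the end-marker contribution compensates the residual $\|\psi'_j\|^2\to 0$. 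Corollary~\ref{corollary11} lets me ignore the $S^1$ part.

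($\Leftarrow$) Given an MMQFA $M$ with cutpoint $p$, I will build $M'$ by running $M$ on the prefix and, at each step, simulating the effect of $\$$ on a fresh copy. I would try to realise this with a tensor-like duplication of the non-halting space, with a small rotation that leaks amplitude $\eta$ into a copy on which $V_\$$ is applied before measuring. For the Büchi condition I would add a dedicated accepting state that receives a geometrically decaying but nonzero amplitude at every step. This makes the Büchi clause automatic for words with $acc_\infty>0$, and Lemma~\ref{no} controls that this leakage cannot re-enter the invariant part.

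The main obstacle is this simulation of $\$$ inside an infinite unitary run, together with the strict/non-strict cutpoint mismatch. Measurement after $\$$ is destructive, so ``prefix accepted infinitely often'' is not literally a cumulative quantity. I expect to need to rescale cutpoints, choosing $p'$ as an affine function of $p$, and to restrict to monotone-limit arguments.
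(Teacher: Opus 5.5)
Your preliminary observations are correct: $acc_j$ and $rej_j$ are monotone, and for $p>\tfrac12$ the rejection clause follows from $acc_\infty\ge p$. In both directions, however, the step you flag as the obstacle is exactly where the argument breaks, and the repairs you sketch do not close it.

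\emph{From an MMQBA $M'$ to an MMQFA $M$.} With the same transitions and $V_{\$}=I$, $\lim(L(M,p'))$ is the set of words whose cumulative acceptance \emph{reaches} $p'$ at some finite step, and this set contains B\"uchi-violating words. Take the swap automaton in the Remark following Theorem~\ref{convergence}, with $V_\#=I$. Every nonempty prefix of $a^\omega$ is accepted with probability $1$, so $a^\omega\in\lim(L(M,p))$ for every $p$, although $a^\omega\notin L(M',p')$. A fresh-acceptance flag cannot exclude such words, because it only \emph{adds} acceptance at $\$$, and acceptance collected during an MMQFA run is irrevocable. So as long as $M$ halts where $M'$ halts, no choice of $V_{\$}$ pushes these prefixes below the cutpoint. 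The alternative, deferring $M'$'s halting until the end-marker, would mean storing amplitudes produced at unboundedly many times in a fixed finite-dimensional non-halting subspace, where they interfere instead of adding up to $acc_j$.

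Your boundary-case repair, with $V_{\$}$ sending part of the $S^2$-component into $S_{acc}$, presupposes that the residual $\|\psi'_j\|^2$ tends to $0$. That is property~2 of Lemma~\ref{subspace decomposition}, and it is false as stated. Take $Q=\{q_0,q_1\}$, $Q_{acc}=\{q_1\}$, $V_a=I$ and $V_b$ the swap. The lemma's construction gives $S^1=\{0\}$ and $S^2=S_{non}$, yet on $a^\omega$ the residual stays $1$ and $acc_\infty=0$. A compensation strong enough to lift every word with $acc_j\uparrow p'$ must in general send essentially the whole residual to acceptance, and then it puts this $a^\omega$ into $\lim(L(M,p'))$.

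\emph{From an MMQFA $M$ to an MMQBA $M'$.} Splitting off a fixed fraction $\eta$ at every step and applying $V_{\$}$ to the copy makes the simulated acceptance a geometrically weighted average over prefixes. That average is dominated by the first $O(1/\eta)$ prefixes, whereas membership in $\lim(L(M,p))$ is a tail property. Concretely, let $M$ track the parity of $a$'s in two non-halting states and decide at $\$$, so that $L=L(M,1)$ is the set of words with an odd number of $a$'s. In the idealised version of your scheme, the simulation contributes acceptance at most $\sum_{j\ge1:\,x_j\in L}\eta(1-\eta)^{j-1}$. For $b^Nab^\omega\in\lim(L)$ this equals $(1-\eta)^N$. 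These words can all meet a cutpoint $p'>\tfrac12$ only if your input-independent B\"uchi state alone contributes acceptance at least $p'$. But then every word is accepted, including $b^\omega\notin\lim(L)$, so no rescaling of cutpoints helps.

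The paper's sketch takes a different route. In one direction it leaks a fraction $\varepsilon$ of the residual only at prefixes $u\in L$; in the other it keeps at least $1-\varepsilon$ of the mass non-halting and lets $V_{\$}$ ``reflect whether $u\in L$''. It does not supply the missing ingredient either. The mass a fixed unitary sends to the halting subspaces is a fixed quadratic form in the current state, so in general it cannot switch on exactly at the prefixes whose acceptance probability reaches a non-isolated cutpoint. In the converse direction, $L$ is never defined in terms of $M'$.

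What is missing, in your plan and in the paper alike, is a mechanism with two parts. It must turn the tail event ``infinitely many prefixes accepted with probability at least $p$'' into cumulative acceptance mass. Conversely, it must express ``$acc_\infty\ge p'$ with infinitely many accepting visits'' as a cutpoint condition on individual prefixes.
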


\begin{proofsketch}
We prove both directions.

\medskip

\noindent
\textbf{($\Rightarrow$)}
Let $M=(Q,\Sigma,\delta,q_0,Q_{acc},Q_{rej})$ be an
MMQFA with cutpoint $p>\tfrac12$, and let $L=L(M,p)$.
We construct an MMQBA
$M'=(Q',\Sigma,\delta',q'_0,Q'_{acc},Q'_{rej})$
with cutpoint $p'>\tfrac12$ such that $L(M',p')=\lim(L)$.

The construction keeps the same state space and accepting/rejecting
subspaces, i.e.,
$Q'=Q$, $Q'_{acc}=Q_{acc}$,
$Q'_{rej}=Q_{rej}$, and $q'_0=q_0$.
The difference is that an MMQBA has no right end-marker.

The construction distributes the effect of the end-marker
$V_{\$}$ over infinitely many steps. Fix $\varepsilon \in (0,1)$.
For each prefix $u$, if $u \in L$, then a fraction $\varepsilon$
of the current non-halting probability is transferred to the
halting subspaces, while a fraction $(1-\varepsilon)$ remains
in $S_{{non}}$. This redistribution preserves total
probability and the structure of superpositions, and hence
can be realized by unitary transitions.

If $x \in \lim(L)$, then $x$ has infinitely many prefixes
$u_j \in L$. Let $r_j$ denote the non-halting probability
after processing $u_j$, then
$$
r_j \le (1-\varepsilon)^j \longrightarrow 0.
$$
Thus the total halting probability tends to $1$.
Moreover, each such prefix contributes acceptance with bias
towards $S_{acc}$, and hence acceptance dominates in the limit.
Therefore, $x \in L(M',p')$ for some $p'>\tfrac12$.

\medskip

\noindent
\textbf{($\Leftarrow$)}
Let $M'=(Q,\Sigma,\delta',q_0,Q_{acc},Q_{rej})$ be an MMQBA
recognizing $\lim(L)$ with cutpoint $p'>\tfrac12$.
We construct an MMQFA
$M=(Q,\Sigma,\delta,q_0,Q_{acc},Q_{rej}),$
again preserving the state space and partitions. Fix $\varepsilon \in (0,1)$ such that $(1-\varepsilon)>\tfrac12$.

The transitions of $M$ are defined so that, after reading any
$u \in \Sigma^*$,
$$
\|P_{{non}}(\psi_u)\|^2 \ge (1-\varepsilon) > \tfrac12.
$$
Such transitions preserve total probability and the structure
of superpositions, and hence can be implemented by unitary operators.
Thus most of the probability remains in the non-halting subspace.

Upon reading the end-marker $\$$, the remaining non-halting
probability is entirely transferred to the halting subspaces,
so that the final acceptance-rejection distribution reflects
whether $u \in L$ or $u \notin L$. Since more than half of the
total probability resides in the non-halting component before
applying $V_{\$}$, this majority determines the final decision,
yielding a cutpoint $p>\tfrac12$.

\medskip

\noindent
This establishes the equivalence.
\end{proofsketch}
\begin{definition}[$\sigma-block$ forcing language]
Let $\Sigma$ be an alphabet and $\sigma \in \Sigma$. A language $L \subseteq \Sigma^*$ is called a $\sigma-block$ forcing language if every infinite word $x\in \lim(L)$ has the following form, such that,
$x=\sigma^{k_1}u\sigma\sigma^{k_2}u\sigma\sigma^{k_3}u\sigma\dots$ where $k_i \geq 1$ for all $i \geq 1$, $u \in \Sigma^{+}$, $u \notin \sigma^{+}$. ($u$ contains at least one symbol different from $\sigma$)
\end{definition}
\begin{example}
We present an example of $\sigma$-block forcing languages.
 and fix $\sigma=a$.  
Consider the language
$L=(a^+bca)^*$. Then $\lim(L)=(a^+bca)^\omega$.
Every infinite word in $\lim(L)$ has the form $a^{k_1}bca\,a^{k_2}bca\,a^{k_3}bca\cdots$ where $k_i\ge1$ for all $i\ge1$. Here $u=bc\in\Sigma^+$ and $u\notin a^+$.
Hence $L$ is an $a$-block forcing language.
\end{example}

The key property of a $\sigma$-block forcing language is that every $x \in \lim(L)$ contains infinitely many long $\sigma$-blocks separated by fixed non-$\sigma$ segments. In an MMQBA, repeated application of $V_\sigma$ over each block forces the finite-dimensional computation to eventually stabilize within a $V_\sigma$-invariant non-halting subspace. However, the block-forcing structure requires acceptance leakage at the end of every block, compelling the computation to exit any previously occupied invariant subspace. By the no-entry lemma~\ref{no}, re-entry is impossible, thus requiring infinitely many pairwise disjoint invariant subspaces—contradicting finite dimensionality.

\begin{theorem}[Non recognizability of $\sigma$-block forcing language]\label{non}
If $L \subseteq \Sigma^*$ is a $\sigma$-block forcing language then there exists no MMQBA $M'$ with cutpoint $p' > \frac{1}{2}$ such that $L(M',p')=\lim(L)$.
\end{theorem}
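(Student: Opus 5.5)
The plan is a proof by contradiction that follows the heuristic sketched just before the statement. Suppose an MMQBA $M'$ with cutpoint $p'>\tfrac12$ satisfies $L(M',p')=\lim(L)$, and fix some $x=\sigma^{k_1}u\sigma\,\sigma^{k_2}u\sigma\cdots\in\lim(L)$. If $\lim(L)$ is empty, the statement should be read as being about nonempty $\lim(L)$; I will note this caveat explicitly. We may choose the block lengths $k_i$ freely, since the definition of $\sigma$-block forcing allows arbitrary $k_i\ge 1$. We take them to be as large as we need.

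\textbf{Step 1: decompose the non-halting space.} Apply Lemma~\ref{subspace decomposition} to write $S_{\text{non}}=S^1\oplus S^2$. By Corollary~\ref{corollary11}, only the $S^2$-component of the current state produces halting probability, so we can restrict attention to the evolution of that component.

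\textbf{Step 2: stabilization inside a $\sigma$-block.} Consider the contraction $A_\sigma:=P_{\text{non}}V_\sigma$ on $S_{\text{non}}$. Split $S_{\text{non}}$ into the span $C_\sigma$ of the unimodular-eigenvalue eigenvectors of $A_\sigma$ and a complement on which $A_\sigma^k\to 0$. On $C_\sigma$ we have $V_\sigma=A_\sigma$ isometrically, so $C_\sigma$ is a $\sigma$-cycle subspace in the sense of the definition. Consequently, after a long block $\sigma^{k_i}$, the non-halting state lies within distance $\eta_i$ of $C_\sigma$, and $\eta_i$ can be made as small as we like by taking $k_i$ large.

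\textbf{Step 3: leakage forces an exit.} The Büchi condition and cutpoint acceptance require that acceptance probability keeps being produced on the segments $u\sigma$ after each block; otherwise we modify the word to leave $\lim(L)$ and argue that the accumulated probabilities cannot distinguish the two words. Producing halting probability means the $S^2$ mass leaving $C_\sigma$ must be strictly positive. Lemma~\ref{no} then shows that amplitude which has left $C_\sigma$, through components on basis states outside it, cannot re-enter $C_\sigma$ by applying $V_\sigma$. Hence each subsequent block stabilizes in a $\sigma$-cycle subspace orthogonal to the previously occupied ones.

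\textbf{Step 4: dimension count.} Iterating Step 3 produces infinitely many pairwise orthogonal nonzero $\sigma$-cycle subspaces inside the finite-dimensional space $S_{\text{non}}$, which is a contradiction. To close the argument quantitatively, we also use the cutpoint $p'>\tfrac12$: the acceptance mass still to be collected after block $i$ must stay bounded below in order to beat rejection. This prevents the residual norm from collapsing to zero before the contradiction is reached.

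The main obstacle is Step 3 and, in particular, applying Lemma~\ref{no} rigorously. That lemma is stated for basis vectors $|r\rangle\notin S$, but $C_\sigma$ need not be spanned by basis states. My first attempt would be to replace $C_\sigma$ by the smallest coordinate subspace containing it and invariant under $V_\sigma$, and verify that it remains inside $S_{\text{non}}$. Failing that, I would use the approximate orthogonality coming from the $\eta_i\to 0$ bounds of Step 2 together with a compactness argument on unit vectors. That would yield a uniform positive lower bound on per-block leakage, and hence a contradiction with the finite total probability $\sum_j\|P_{\text{halt}}(\psi_j)\|^2\le 1$.
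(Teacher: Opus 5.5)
There is a genuine gap, and it cannot be closed: the statement is false even when $\lim(L)\neq\emptyset$, not only in the empty case you flagged (which the paper's own finite-language corollary already contradicts). Take $\Sigma=\{a,b\}$ and $L=\{(aab)^n : n\ge 1\}$. Then $\lim(L)=\{(aab)^\omega\}$, and $(aab)^\omega=a^{2}\,b\,a\,a^{1}\,b\,a\,a^{1}\,b\,a\cdots$ with $k_1=2$, $k_i=1$ for $i\ge 2$ and $u=b$. So $L$ is $a$-block forcing. Now build an automaton as follows. Use non-halting phase states $q_0,q_1,q_2$ and set $V_\#=I$. Let $V_aq_0=q_1$, $V_aq_1=q_2$ and $V_bq_2=\sqrt{1-\varepsilon}\,q_0+\sqrt{\varepsilon}\,q_{acc}$. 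Every other (phase, letter) pair sends the phase state to a rejecting basis state, and the matrices are completed to unitaries. With $\varepsilon=9/25$ this is essentially the paper's appendix automaton for $(aab)^\omega$. The non-halting part is always a multiple of a single phase state. On $(aab)^\omega$ this gives $acc_j\to 1$, $rej_j=0$ and an accepting leak in every period. On any other word, the first wrong letter sends all remaining amplitude to rejection; after that $\psi_j=0$, so the Büchi condition fails. Hence $\lim(L)$ is recognized with every cutpoint $p'\in(\tfrac12,1]$.

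Your argument breaks at three points, all visible in this example. First, the definition of $\sigma$-block forcing only constrains the shape of words that lie in $\lim(L)$. It does not put words with long or freely chosen blocks into $\lim(L)$, so you cannot take the $k_i$ large. In the example $P_{\text{non}}V_a$ is nilpotent, $C_a=\{0\}$, and Step 2 gives nothing. Second, in Step 3 the obstacle you single out is not the real one. From $V_\sigma C_\sigma=C_\sigma$, unitarity gives $V_\sigma C_\sigma^\perp=C_\sigma^\perp$, and $P_{\text{non}}$ preserves $C_\sigma^\perp$ because $C_\sigma\subseteq S_{\text{non}}$. What fails is your conclusion. This constrains only $V_\sigma$. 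A residue that is not in $C_\sigma$ generally still has a component in $C_\sigma$, and the $u$-segments can move amplitude back into $C_\sigma$. In fact every $\sigma$-cycle subspace lies inside $C_\sigma$, and by your own Step 2 every long block ends near the same $C_\sigma$. So nothing generates fresh orthogonal cycle subspaces. Third, the quantitative fallback is false. Acceptance requires only $\lim_j acc_j\ge p'$, $\lim_j rej_j<p'$, and infinitely many strictly positive accepting leaks, which may vanish geometrically. The residual norm can therefore tend to $0$, as it does above, and there is no uniform per-block leakage to contradict $\sum_j\|P_{\text{halt}}(\psi_j)\|^2\le 1$. The paper's proof follows essentially your route: an invariant support subspace $S^{(1)}$, the no-entry lemma, then infinitely many cycle subspaces $S^{(m)}$. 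It fails at the same step, reading $\psi_1'\notin S^{(1)}$ as $\psi_1'\perp S^{(1)}$ and ignoring re-entry through the $u$-segments. Following it more closely would not help.
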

\begin{proofsketch} 
Assume for contradiction that an MMQBA $(Q,\Sigma,\delta,q_0,Q_{acc},Q_{rej})$ with cutpoint $p'>\tfrac12$ recognizes $\lim(L)$. Since $L$ is a $\sigma$-block forcing language, every word
$x$ in $\lim(L)$ has the form
$\sigma^{k_1}u\sigma\sigma^{k_2}u\sigma\sigma^{k_3}u\sigma\cdots$
where $k_i\ge1$, $u\in\Sigma^+$, and $u\notin\sigma^+$.
Hence the transition $V_\sigma$ is applied infinitely many times
during the run of $M'$ on $x$.

Let $S_{\text{non}}$ denote the non-halting subspace and
$Q_{\text{non}}=Q\setminus Q_{\text{halt}}$.
For each $\sigma$-block in the run, let $\psi_{i,m}$ denote the
non-halting state obtained after the $m$-th application of
$V_\sigma$ in the $i$-th block. Since infinitely many non-halting states appear during the run and
$Q_{\text{non}}$ is finite, some basis states must occur with
nonzero amplitude infinitely often. We define
\[
B_\infty :=
\{ q\in Q_{\text{non}} :
q \text{ has nonzero amplitude in infinitely many
states } \psi_{i,m}\} \mbox{ and }\]
\[
S^{(1)}=\operatorname{span}\{\,|q\rangle : q\in B_\infty\,\}
\subseteq S_{\text{non}} .
\]
Then $S^{(1)}\neq\{0\}$.
Moreover, if $q\in B_\infty$, then $q$ appears in infinitely many
states $\psi_{i,m}$ and
$\psi_{i,m+1}=P_{\text{non}}(V_\sigma\psi_{i,m})$.
If $V_\sigma|q\rangle$ had a component on some basis state
$r\notin B_\infty$, then $r$ would also appear in infinitely many
subsequent non-halting states, contradicting the definition of
$B_\infty$. Hence $V_\sigma(S^{(1)})\subseteq S^{(1)},$so $S^{(1)}$ is a nonzero $V_\sigma$-cycle subspace.

Next we show that vectors in $S^{(1)}$ produce no acceptance under
$V_\sigma$. If some $\psi\in S^{(1)}$ satisfied
$\|P_{acc}(V_\sigma\psi)\|^2>0$, then along the infinite
input $\sigma^\omega$ acceptance leakage would occur infinitely
many times, forcing the cumulative acceptance probability to
exceed the cutpoint. This would imply
$\sigma^\omega\in\lim(L)$, contradicting the definition of a
$\sigma$-block forcing language. Hence
$\|P_{acc}(V_\sigma\psi)\|^2=0
\quad\text{for all }\psi\in S^{(1)}.$

Now consider the configuration after reading the prefix
$\sigma^{k_1}u$ of $x$. If the computation halted at this point,
the remaining infinite input could not be processed, contradicting
the semantics of MMQBA. Hence the computation continues from a
non-halting state.

Consider the next symbol $\sigma$ in the prefix
$\sigma^{k_1}u\sigma$. Since this prefix occurs in some word of
$L$, the acceptance–rejection bias satisfies $
\|P_{acc}(V_\sigma\psi_1)\|^2>
\|P_{rej}(V_\sigma\psi_1)\|^2$
and in particular a positive accepting component is produced.
After measurement the non-halting residue
$\psi_1'=P_{\text{non}}(V_\sigma\psi_1)$ is nonzero.
Since vectors of $S^{(1)}$ produce no accepting component under
$V_\sigma$, we must have $\psi_1'\notin S^{(1)}$.

Because $S^{(1)}$ is $V_\sigma$-invariant, once the computation
leaves $S^{(1)}$ it can never re-enter it. During the next
$\sigma$-block $\sigma^{k_2}$ the same argument produces another
nonzero $V_\sigma$-cycle subspace
$S^{(2)}\subseteq S_{\text{non}}$ with
$S^{(2)}\cap S^{(1)}=\{0\}$.
Repeating this reasoning for each block $\sigma^{k_m}$ yields
infinitely many pairwise disjoint nonzero subspaces $S^{(1)},S^{(2)},S^{(3)},\ldots$ satisfying $V_\sigma(S^{(m)})\subseteq S^{(m)}.$
This contradicts the finite dimension of $S_{\text{non}}$.
Hence no MMQBA with cutpoint greater than $\tfrac12$
can recognize $\lim(L)$. 
\end{proofsketch}
\begin{corollary}\label{cor1}
If $L \subseteq \Sigma^*$ is a $\sigma$-block forcing language, then
there exists no MMQFA $M$ with cutpoint $p > \tfrac12$ such that
$L(M,p)=L$.
\end{corollary}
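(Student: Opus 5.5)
The corollary follows by contraposition from the limit characterization (Theorem~\ref{limit-characterization}) and the non-recognizability result (Theorem~\ref{non}).

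Assume, for contradiction, that $L$ is a $\sigma$-block forcing language and that there is an MMQFA $M$ with cutpoint $p>\tfrac12$ such that $L(M,p)=L$.

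\textbf{Step 1.} Since $L(M,p)=L$, we have $\lim(L(M,p))=\lim(L)$, so $L':=\lim(L)\subseteq\Sigma^\omega$ has the form $\lim(L(M,p))$ for an MMQFA $M$ with cutpoint $p>\tfrac12$.

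\textbf{Step 2.} The MMQFA-to-MMQBA direction of Theorem~\ref{limit-characterization} then supplies an MMQBA $M'$ and a cutpoint $p'>\tfrac12$ with $L(M',p')=L'=\lim(L)$.

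\textbf{Step 3.} By hypothesis $L$ is $\sigma$-block forcing, and block forcing is a property of $\lim(L)$ alone. Theorem~\ref{non} therefore says that no MMQBA with cutpoint greater than $\tfrac12$ recognizes $\lim(L)$, contradicting Step~2. Hence no such $M$ exists.

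The only point needing care is the bookkeeping in applying Theorem~\ref{limit-characterization}. Its statement lists the MMQBA side first, but we need the implication running from an MMQFA presentation $\lim(L(M,p))$ to an MMQBA $M'$, which is the direction whose sketch builds $M'$ from $M$ by spreading $V_{\$}$ over infinitely many steps. We must also check that both cutpoints lie strictly above $\tfrac12$, since Theorem~\ref{non} applies only in that regime. Given these, the argument is a direct composition of the two earlier results, and no further obstacle is expected.
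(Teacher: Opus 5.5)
Your argument is exactly the paper's proof: assume an MMQFA $M$ with $L(M,p)=L$ and $p>\tfrac12$, apply the MMQFA-to-MMQBA direction of Theorem~\ref{limit-characterization} to get an MMQBA recognizing $\lim(L)$ with cutpoint above $\tfrac12$, and contradict Theorem~\ref{non}; you also correctly noticed that the paper's sketch labels this MMQFA-to-MMQBA construction ``($\Rightarrow$)''. The composition is sound, but it is only as good as those two theorems, and under the paper's literal definitions the statement itself fails: any $L$ with $\lim(L)=\emptyset$ is vacuously $\sigma$-block forcing, which contradicts Corollary~\ref{cor2}, and $(abca)^*$ is $a$-block forcing yet is recognized with cutpoint $1$ by a reversible permutation MMQFA, so the defect lies upstream of your step rather than in it.
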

Proof is omitted due to space constraint.
\begin{corollary}\label{cor2}
If $L\subseteq\Sigma^*$ is a finite language, then
there exist an MMQFA $M$ and an MMQBA $M'$ such that $L(M,1)=L$ and $L(M',1)=\lim(L)$.
\end{corollary}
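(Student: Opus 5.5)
The plan is to observe first that $\lim(L)=\emptyset$ whenever $L$ is finite, and then to give two simple constructions. Both are deterministic, i.e.\ all unitaries are permutation matrices. If $x\in\Sigma^\omega$ had infinitely many prefixes in $L$, these prefixes would have pairwise distinct lengths and hence be pairwise distinct words. This is impossible when $L$ is finite, so $\lim(L)=\emptyset$. It therefore suffices to exhibit an MMQBA $M'$ that accepts no word with cutpoint $1$, and an MMQFA $M$ with $L(M,1)=L$.

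\textbf{The MMQBA.} Take $Q=\{q_0\}$ with $Q_{acc}=Q_{rej}=\emptyset$, and let every $V_\sigma$ (including $V_\#$) be the identity. Then $\psi'_j=\ket{q_0}$ for all $j$ and $acc_j=0$, so both the Büchi condition and the cutpoint acceptance condition $\lim_j acc_j\ge 1$ fail for every $x$. This is exactly the situation described after Lemma~\ref{subspace decomposition}: the initial state lies in the trap component $S^1$. Hence $L(M',1)=\emptyset=\lim(L)$.

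\textbf{The MMQFA.} Let $N=1+\max\{|w|:w\in L\}$. As non-halting states take the words $w\in\Sigma^{\le N}$, with $q_0=\ket{\varepsilon}$. As rejecting states take fresh states $r_{w,\sigma}$ for $|w|=N$, $\sigma\in\Sigma$, together with fresh states $r_w$ for $w\notin L$, $|w|\le N$. As accepting states take fresh states $a_w$ for $w\in L$.
\begin{itemize}
\item Let $V_\#$ be the identity.
\item For $\sigma\in\Sigma$, set $V_\sigma\ket{w}=\ket{w\sigma}$ if $|w|<N$, and $V_\sigma\ket{w}=\ket{r_{w,\sigma}}$ if $|w|=N$.
\item Set $V_\$\ket{w}=\ket{a_w}$ if $w\in L$, and $V_\$\ket{w}=\ket{r_w}$ otherwise.
\end{itemize}

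The key step, and the only real obstacle, is unitarity. Each of these maps is an injective partial map on the finite basis $Q$: distinct $w$ give distinct $w\sigma$, the states $r_{w,\sigma}$, $a_w$, $r_w$ are all distinct, and the images of non-halting states never coincide. An injective map from a subset of a finite set into that set extends to a bijection of the whole set. So I will extend each map arbitrarily on the remaining basis vectors (mostly halting states, which never carry amplitude) to a permutation of $Q$, and take $V_\sigma$, $V_\$$ to be the corresponding permutation matrices; these are unitary.

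\textbf{Correctness.} The run on $\#x\$$ is deterministic: the state after reading $x$ is $\ket{x}$ as long as $|x|\le N$. If $x\in L$, then $|x|<N$, and $V_\$$ sends $\ket{x}$ to $\ket{a_x}$, so $x$ is accepted with probability $1$. If $x\notin L$ and $|x|\le N$, then $V_\$$ sends all mass to $\ket{r_x}$, so the acceptance probability is $0$. If $|x|>N$, the mass moves to some $r_{w,\sigma}$ at step $N+1$ and the computation rejects with probability $1$. Hence $L(M,1)=L$, which completes the proof.
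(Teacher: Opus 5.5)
Your proof is correct. Its MMQBA half is the paper's: a single non-halting state, identity transitions, $Q_{acc}=Q_{rej}=\emptyset$. The difference there is that you actually justify $\lim(L)=\emptyset$, which the paper only asserts.

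The MMQFA half differs from the paper in a way that matters. The paper uses only a length counter $q_0,\dots,q_N$ with one accepting and one rejecting state. It sets $V_{\$}\ket{q_n}$ to $\ket{q_{acc}}$ or $\ket{q_{rej}}$ ``according to whether $x\in L$''. That construction does not work, for three reasons:
\begin{itemize}
\item $\ket{q_n}$ records only $|x|$, so $V_{\$}$ is ill-defined as soon as two words of the same length differ in membership.
\item Sending several $\ket{q_n}$ to the same halting vector is not injective on the basis, so $V_{\$}$ cannot be unitary.
\item $V_\sigma\ket{q_N}$, i.e.\ the treatment of inputs longer than $N$, is never specified.
\end{itemize}
Your construction supplies exactly what is missing. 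The prefix-tree states $\Sigma^{\le N}$ let the state determine $x$. The distinct halting targets $a_w$, $r_w$ and the overflow states $r_{w,\sigma}$ keep every map injective. Extending an injective partial map on a finite basis to a permutation then gives unitarity. The cost is a state set exponential in $N$, which is irrelevant for this statement.

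One loose end, shared with the paper: for $L=\emptyset$ the quantity $\max\{|w|:w\in L\}$ is undefined. State explicitly that you take, e.g., $N=0$ in that case. Your construction then rejects every input with probability $1$, as required.
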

Proof is omitted due to space constraint.
\section{Closure Properties of MMQBA}
We present closure properties of MMQBA in this section. 
The class of languages recognized by MMQBA is not closed under union, intersection and complement. For completeness.
\begin{theorem}\label{union}
The class of languages recognized by MMQBA is not closed under 
union.
\end{theorem}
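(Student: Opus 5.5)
The plan is to exhibit two MMQBA-recognizable languages $L_1, L_2$ whose union has the shape forbidden by Theorem~\ref{non}. The natural route goes through Theorem~\ref{limit-characterization} together with the fact that $\lim$ commutes with union: $\lim(A\cup B)=\lim(A)\cup\lim(B)$. A word has infinitely many prefixes in $A\cup B$ iff it has infinitely many in $A$ or infinitely many in $B$, by pigeonhole. So if $L_1=\lim(L(M_1,p_1))$ and $L_2=\lim(L(M_2,p_2))$, then $L_1\cup L_2=\lim(A\cup B)$ with $A=L(M_1,p_1)$ and $B=L(M_2,p_2)$.

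Non-closure of the union then reduces to choosing $A$ and $B$ so that three things hold.
\begin{enumerate}
\item $A$ and $B$ are individually MMQFA-recognizable with cutpoint $>\tfrac12$, so $\lim(A)$ and $\lim(B)$ are MMQBA-recognizable by Theorem~\ref{limit-characterization} ($\Leftarrow$ direction as stated, i.e.\ MMQFA-limit gives MMQBA).
\item $A\cup B$ is a $\sigma$-block forcing language.
\item Hence, by Theorem~\ref{non}, $\lim(A\cup B)=L_1\cup L_2$ is not recognized by any MMQBA with cutpoint $>\tfrac12$.
\end{enumerate}

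The first candidates to try use finite languages, since Corollary~\ref{cor2} directly gives MMQFA and MMQBA for them with cutpoint $1$. However, the limit of a finite language is empty, so this only produces trivial examples and is useless here. I would instead take the block-forcing language $L=(a^+bca)^*$ from the earlier example and split it as $L=A\cup B$. For instance, let $A$ consist of the words of $L$ with an even number of blocks and $B$ those with an odd number. Alternatively, split according to whether the last block has $k_i$ odd or even. The split should be chosen so that each of $\lim(A)$ and $\lim(B)$ no longer has the forcing shape, or at least is plausibly MMQFA-recognizable piecewise.

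The main obstacle is step~1: showing $A$ and $B$ are MMQFA-recognizable with cutpoint $>\tfrac12$. Corollary~\ref{cor1} says a $\sigma$-block forcing language itself is not. So $A$ and $B$ must escape the forcing definition, which quantifies over every word of the limit. The cleanest way to arrange this is to make $\lim(A)$ contain words violating the block form, so that $A$ is not forcing, while $\lim(A)\cup\lim(B)$ still must be forcing. That is contradictory, since the union contains those violating words.

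Hence I would reverse tactics and let the union fail through intersection instead. Take $L_1=\lim(A)$ and $L_2=\lim(B)$ with $A=a^+\Sigma^*$-type languages that MMQFA handle well, as in Example~\ref{example-fixed}. Then argue directly that the union requires a $V_a$-invariant subspace producing leakage, by mimicking the proof of Theorem~\ref{non}. I expect this to be where most care is needed: producing explicit unitary matrices for $M_1$ and $M_2$, in the style of Example~\ref{example-fixed}, and verifying the three acceptance conditions, while the impossibility for the union is obtained by invoking the subspace argument of Theorem~\ref{non} and Lemma~\ref{no}.
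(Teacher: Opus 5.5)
\textbf{Gap: the proposal does not reach a proof.} You rightly abandon the first route. The second never fixes $L_1,L_2$, never writes down automata, and never gives an impossibility argument for the union.

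The first route fails for a structural reason, and the same reason undermines ``mimicking Theorem~\ref{non}''. The hypothesis of Theorem~\ref{non} is inherited downward: since $\lim(A)\subseteq\lim(A\cup B)$, if $A\cup B$ is $\sigma$-block forcing then so is $A$, and Theorem~\ref{non} already rules out $\lim(A)$. More generally, any impossibility argument for $L_1\cup L_2$ must use a member of $L_1\setminus L_2$ and a member of $L_2\setminus L_1$ \emph{together}; otherwise it would also refute $L_1$ or $L_2$ alone. In other words, it must constrain a single configuration of the hypothetical automaton from both sides, and your plan has no such ingredient. Choosing ``$a^+\Sigma^*$-type'' languages does not help by itself. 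For example, $\lim(a\Sigma^*)\cup\lim(b\Sigma^*)=\Sigma^\omega$, which is recognized by one non-halting state that leaks half its mass to an accepting state at every step.

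\textbf{The paper's route, and a warning about it.} The paper supplies exactly this two-sided ingredient. It takes $L_1=(ab)^\omega$ and $L_2=(aab)^\omega$ (Examples~\ref{ex3} and~\ref{ex5}) and uses $S_{\text{non}}=S^1\oplus S^2$ from Lemma~\ref{subspace decomposition} to constrain the single state $\psi_{aa}$. The word $(aab)^\omega$ forces $\psi_{aa}$ to have a nonzero $S^2$-component. The word $(ab)^\omega$, together with $aa(ab)^\omega\notin L_1\cup L_2$, is claimed to forbid one. Do not adopt that second step: an $S^2$-component only guarantees that halting mass is eventually produced, and that mass may be rejecting.

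In fact this pair is not a witness at all. Identify states with basis vectors. Take non-halting states $s_0,s_1,t_0,t_1,u_0,u_1,u_2$ (start $s_0$), accepting states $x,y$, and $V_\#=I$.
\begin{itemize}
\item $V_a$ sends $s_0\mapsto s_1$, $s_1\mapsto\tfrac{1}{\sqrt2}(u_2-x)$, $u_1\mapsto\tfrac{1}{\sqrt2}(u_2+x)$, $u_0\mapsto u_1$, and $t_0\mapsto t_1$.
\item $V_b$ sends $s_1\mapsto\tfrac{1}{\sqrt2}(t_0-y)$, $t_1\mapsto\tfrac{1}{\sqrt2}(t_0+y)$, and $u_2\mapsto u_0$.
\item Every remaining (non-halting state, letter) pair goes to its own rejecting state.
\end{itemize}
For each letter the images are orthonormal, so both maps extend to unitaries.

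On $(ab)^\omega$ and $(aab)^\omega$, half of the surviving mass is accepted once per period. Hence $acc_j\to 1$, $rej_j=0$, and the B\"uchi condition holds. Any other word leaves this path at some finite step; all remaining mass is then rejected and no further accepting step occurs. Thus $L(M,p)=(ab)^\omega\cup(aab)^\omega$ for every $p\in(\tfrac12,1]$. A correct proof therefore needs different witnesses and a valid two-sided argument, and neither your sketch nor the paper's proof provides one.
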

\begin{proofsketch}
Full construction is presented in Theorem \hyperlink{union}{4} of Appendix.

Let $L_1=(ab)^\omega$ and $L_2=(aab)^\omega$, both recognizable by
MMQBA, these constructions are straight forward. 
Assume for contradiction that there exists an MMQBA $M$ with
cutpoint $p>\tfrac12$ recognizing $L=L_1\cup L_2$.

By Lemma~\ref{subspace decomposition}, the non-halting subspace
decomposes as $S_{\text{non}}=S^1\oplus S^2$, and the initial state
must have a nonzero component in $S^2$. Consider the state $\psi_{aa}$
reached after reading the prefix $aa$ from the initial state. 

For the run on $x_1=(ab)^\omega$, the prefix $aa$ does not occur. If
$\psi_{aa}$ had a nonzero component in $S^2$, then continuing the
computation from $\psi_{aa}$ along $(ab)^\omega$ would again produce
infinitely many accepting projections, causing words with
prefix $aa$ to be accepted. Hence $\psi_{aa}\in S^1$.

However, in the word $x_2=(aab)^\omega$ the prefix $aa$ occurs
infinitely often, and acceptance requires that the computation after
each occurrence continues to generate accepting amplitude. Thus the
state after reading $aa$ must lie in $S^2$, i.e., $\psi_{aa}\in S^2$.

This contradicts $S^1\cap S^2=\{0\}$. Therefore no MMQBA can recognize
$L_1\cup L_2$, and the class of MMQBA-recognizable $\omega$-languages
is not closed under union.
\end{proofsketch}
\begin{theorem}\label{intersection}
The class of languages recognized by MMQBA is not closed under intersection.
\end{theorem}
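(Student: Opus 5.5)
We want to show non-closure under intersection by exhibiting two $\omega$-languages $L_1, L_2$ that are MMQBA-recognizable with some cutpoint $p>\tfrac12$, but whose intersection is not. The cleanest route reuses Theorem~\ref{non} (non-recognizability of $\sigma$-block forcing languages). We choose $L_1$ and $L_2$ so that $L_1\cap L_2=\lim(L)$ for an $a$-block forcing language $L$, such as $L=(a^+bca)^*$ from the example, with $\lim(L)=(a^+bca)^\omega$. The task then reduces to writing $(a^+bca)^\omega$ as an intersection of two ``simpler'' $\omega$-languages, each of which admits an MMQBA.

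For the decomposition, take $L_1=\lim\bigl((a^+bc)^*\,\Sigma^*\bigr)$-style constraints in the following sense. $L_1$ enforces that every occurrence of $b$ is immediately followed by $c$, that every $c$ is immediately followed by $a$, and that $b$ occurs infinitely often. $L_2$ enforces that every $b$ is preceded by at least two $a$'s since the last $c$ (so the block $a^{k}$ with $k\ge1$ after the separator $a$ is present), and that the word starts with $a$. Their intersection is exactly $(a^+bca)^\omega$.

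The key steps, in order, are as follows.
\begin{enumerate}
\item Fix $L_1, L_2$ concretely so that $L_1\cap L_2=(a^+bca)^\omega$, and verify this set identity by a routine factorization argument.
\item For each $L_i$, give an MMQFA $M_i$ with cutpoint $p_i>\tfrac12$ such that $L_i=\lim(L(M_i,p_i))$, and then invoke Theorem~\ref{limit-characterization} ($\Leftarrow$ direction) to obtain MMQBA $M_i'$ with $L(M_i',p_i')=L_i$. Alternatively, build the $M_i'$ directly in the style of Example~\ref{example-fixed}: use small unitaries that leak a fixed fraction of amplitude into $Q_{acc}$ at each ``good'' event (reading $b$ correctly placed) and into $Q_{rej}$ with larger weight on a violating pattern, so that the Büchi condition and both cutpoint conditions hold exactly on $L_i$.
\item Since $(a^+bca)^*$ is $a$-block forcing, Theorem~\ref{non} shows that no MMQBA with cutpoint $>\tfrac12$ recognizes $L_1\cap L_2$. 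This concludes non-closure.
\end{enumerate}

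The main obstacle is step~2. The restricted power of measure-many models (bounded-error MMQFA recognize only a proper subclass of regular languages) means the chosen $L_i$ must be kept simple enough to be genuinely realizable. They must also be easy enough to certify, with explicit acceptance and rejection probabilities as in Example~\ref{example-fixed}. I would first try to make each $L_i$ a ``local'' condition checkable by a forward-reversible automaton, since such languages are the standard ones known to be MMQFA-recognizable with cutpoint $>\tfrac12$. Then I would use the limit characterization to lift them to MMQBA. If a particular $L_i$ resists this, I would fall back to a direct explicit three- or four-state construction and compute the geometric series for $acc_\infty$ and $rej_\infty$ on representative words, as done in the example.
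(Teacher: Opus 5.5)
\textbf{The gap is in step~3.} You cannot use Theorem~\ref{non} as a black box: it is false, and it fails for precisely the language you picked, which is also the paper's own example. It already fails vacuously. A finite $L$ has $\lim(L)=\emptyset$, so it is $\sigma$-block forcing, yet Corollary~\ref{cor2} recognizes $\emptyset$ with cutpoint $1$.

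Here is an MMQBA over $\Sigma=\{a,b,c\}$ with $L(M,p)=(a^+bca)^\omega$ for every $p\in(\tfrac12,1]$.
\begin{itemize}
\item Put $\alpha=\beta=1/\sqrt2$. Take non-halting states $q_0,e_1,e_2,e_3,e_4$, one accepting state $h$, and rejecting states $r_0,\ldots,r_3$.
\item Let $V_\# q_0=\alpha e_1+\beta e_3$, and let $V_a,V_b,V_c$ fix $q_0$.
\item $V_a\colon e_1\mapsto e_1,\ e_2\mapsto e_3,\ e_3\mapsto \sqrt{p}\,h+\sqrt{1-p}\,r_0,\ e_4\mapsto r_1$.
\item $V_b\colon e_1\mapsto e_4,\ e_2\mapsto r_1,\ e_3\mapsto r_2,\ e_4\mapsto r_3$.
\item $V_c\colon e_4\mapsto \alpha e_1+\beta e_2,\ e_1\mapsto r_1,\ e_2\mapsto r_2,\ e_3\mapsto r_3$.
\end{itemize}
Each map sends non-halting basis vectors to orthonormal images, so it extends to a unitary.

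The non-halting part of any run is always a multiple of one of $N_1=\alpha e_1+\beta e_3$, $R=e_1$, $C=e_4$, $N_2=\alpha e_1+\beta e_2$. The moves $N_1\xrightarrow{a}R\xrightarrow{a}R\xrightarrow{b}C\xrightarrow{c}N_2\xrightarrow{a}N_1$ trace exactly the blocks $a^kbca$ with $k\ge1$. Among these moves only $N_1\xrightarrow{a}R$ halts anything: half the current mass, split $p:(1-p)$ between $h$ and $r_0$. Every other move sends at least half the current mass to $Q_{rej}$.
\begin{itemize}
\item On $x\in(a^+bca)^\omega$, the mass after $i$ blocks is $2^{-i}$. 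Hence $\lim acc_j=p$ and $\lim rej_j=1-p<p$, and there is an accepting leak in every block.
\item On any other word, either a violating move rejects some mass $V>0$, or the run ends in $R$ reading $a^\omega$ with residual mass $\rho>0$. In both cases $\lim acc_j=p(1-V-\rho)<p$.
\end{itemize}
So your $L_1\cap L_2$ is MMQBA-recognizable and cannot witness non-closure.

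The example also shows what the proof of Theorem~\ref{non} misses. The $V_a$-invariant line $\mathrm{span}(e_1)$ is re-entered in every block through $V_c$; the no-entry lemma only constrains $V_\sigma$ itself. Moreover, the accepting leak happens at the first $a$ of each $a^k$-run, not at the terminal $a$ after $bc$, which halts nothing. So no infinite family of disjoint invariant subspaces is ever forced.

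Two further problems, moot given the above:
\begin{itemize}
\item The identity in step~1 is wrong as stated. $(abca)^\omega\in(a^+bca)^\omega$ has only one $a$ before its first $b$, so your $L_2$ excludes it. Conversely, $ac(aabc)^\omega\notin(a^+bca)^\omega$ satisfies both of your constraints.
\item The route through Theorem~\ref{limit-characterization} in step~2 inherits that proof's step of leaking an $\varepsilon$-fraction ``whenever $u\in L$''. That is a property of the whole prefix, and a fixed family of unitaries need not be able to implement it.
\end{itemize}

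For comparison, the paper does not invoke Theorem~\ref{non} here. It uses the parity-limit languages $L_a,L_b$ and argues that an MMQBA accepting every $(a^kb^\ell)^\omega$ must also accept some $a^{2t}b^\omega\notin L_a\cap L_b$. That argument is itself informal, so it is not a safe fallback either. For instance, the appendix automaton offered for $L_a$ has $V_b=I$, so it produces no halting at all on the tail $b^\omega$. Hence $ab^\omega\in L_a$ fails the B\"uchi condition there.

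Whatever pair of languages you use, what is missing is a genuine lower-bound technique for MMQBA. The construction above shows that the B\"uchi condition together with a non-isolated cutpoint ``$\ge p$'' already handles the $\sigma^+$-block patterns that the block-forcing heuristic treats as impossible.
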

\begin{proofsketch}
Let $\Sigma=\{a,b\}$ and define
$
L_a=\lim\{\,w \mid w \text{ has an odd number of $a$'s}\,\}$
 and 
$L_b=\lim\{\,w \mid w \text{ has an odd number of $b$'s}\,\}.
$ Both languages are recognizable by MMQBA, these are trivial constructions.

Let $L=L_a\cap L_b$. Suppose, for contradiction, that there exists
an MMQBA $M$ with cutpoint $p>\tfrac12$ such that $L(M,p)=L$.
Then for all $k,\ell\ge1$ we have $(a^k b^\ell)^\omega\in L(M,p)$,
while $a^{2t}b^\omega\notin L(M,p)$ for every $t\ge1$.

Analyzing the run of $M$ on $(a^k b^\ell)^\omega$, the
non-halting space decomposes as $S_{\text{non}}=S^1\oplus S^2$.
States reached after the $a$-segments lie in $S^1$, while accepting
amplitude is generated from $S^2$ during the $b$-segments. The run
therefore repeatedly alternates ($S^1 \rightarrow S^2 \rightarrow S^1\dots$)
along the periodic input.

Using this structure one can construct $t\ge1$ such that the run of
$M$ on $a^{2t}b^\omega$ still accumulates acceptance probability
above the cutpoint and satisfies the Büchi condition. Hence
$a^{2t}b^\omega\in L(M,p)$, contradicting the definition of $L$.
Therefore the class of languages recognized by MMQBA is not closed
under intersection.
\end{proofsketch}
\begin{theorem}\label{complement}
The class of languages recognized by MMQBAs with cutpoint $p$ 
is not closed under complement.
\end{theorem}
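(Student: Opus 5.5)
The natural strategy is to exhibit a language $L \subseteq \Sigma^\omega$ that some MMQBA recognizes with cutpoint $p>\tfrac12$, and whose complement $\Sigma^\omega\setminus L$ contains words of the form excluded by Theorem~\ref{non}. I would take $\Sigma=\{a,b,c\}$, $\sigma=a$, and start from a language $L$ whose complement contains the $a$-block forcing language $\lim((a^+bca)^*)=(a^+bca)^\omega$ as its essential part. Concretely, I would let $L$ be the set of words that eventually contain an ``error'' relative to the pattern $(a^+bca)^\omega$. Since the complement $(a^+bca)^\omega$ is by construction $\lim$ of an $a$-block forcing language, Theorem~\ref{non} forbids its recognition by any MMQBA with cutpoint $p'>\tfrac12$.

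The first step is to show that $L$ itself is MMQBA-recognizable. The plan is to use Theorem~\ref{limit-characterization} and produce an MMQFA $M$ with cutpoint $p>\tfrac12$ such that $\lim(L(M,p))=L$. The target is an MMQFA for the finite-word language $E=\Sigma^*F\Sigma^*$, where $F$ is a finite set of forbidden factors, for instance $bb$, $ab\,a$ without $c$, a leading $b$, and so on. Once an error factor is seen, every longer prefix lies in $E$, so $\lim(E)$ is exactly the set of words containing an error.

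I would build $M$ by a reversible detector. While no error has occurred, the non-halting amplitude stays in states tracking the last symbol read. When an error factor is completed, a unitary sends a large fraction of that amplitude into $Q_{acc}$. Corollary~\ref{cor2} handles the finite-pattern part and suggests the needed gadget: finitely many fixed factors can be detected with certainty.

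The second step assembles the argument. Suppose the class were closed under complement. Then some MMQBA $M''$ with cutpoint above $\tfrac12$ would recognize $\Sigma^\omega\setminus L=(a^+bca)^\omega=\lim((a^+bca)^*)$. Applying Theorem~\ref{non} to the $a$-block forcing language $(a^+bca)^*$ contradicts this. Because the theorem is stated for a fixed cutpoint $p$, I would note that the cutpoint obtained for $L$ through Theorem~\ref{limit-characterization} can be taken to be the same $p$, or argue for every $p>\tfrac12$.

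The main obstacle is step one: realizing error detection with unitary transitions under the measure-many discipline. MMQFA with bounded error are known not to recognize some languages of the form $\Sigma^*F\Sigma^*$, for example $a^*b^*$-type obstructions, so the choice of $F$ matters. If direct detection of $E$ fails, I would switch to a simpler complemented pair. One option is $L=\Sigma^*b\Sigma^\omega$ (words containing a $b$), which is easily recognized by sending amplitude to accepting states on each $b$. Its complement $a^\omega$ is then compared against the no-entry argument of Lemma~\ref{no} together with the Büchi and cutpoint conditions, on the reasoning that leakage along $V_a$ must eventually cease. I expect this fallback also to need care with the cutpoint semantics, since $a^\omega$ is itself trivially recognizable, so I would prefer the block-forcing route and concentrate effort on the MMQFA construction.
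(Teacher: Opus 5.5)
\textbf{There is a genuine gap: your Step 1 cannot be carried out, and the obstruction is not the unitary realization of the error detector.}

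First, the language you describe does not have complement $(a^+bca)^\omega$. An error is a finite prefix that cannot be extended into $(a^+bca)^\omega$. The word $a^\omega$ (or $abca\,a^\omega$) has no such prefix, since every $a^m$ extends to $a^mbca\cdots$. Hence $\Sigma^\omega\setminus\lim(E)\supseteq(a^+bca)^\omega\cup(a^+bca)^*a^\omega$. This set contains $a^\omega$, so it is not the limit of any $a$-block forcing language, and Theorem~\ref{non} does not apply to it.

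Second, suppose you repair this by taking $L=\Sigma^\omega\setminus(a^+bca)^\omega$ exactly. Then $L$ contains $w\,a^\omega$ for every $w\in(a^+bca)^*$. No MMQBA with cutpoint $p>\tfrac12$ accepts all of these words while rejecting $(a^+bca)^\omega$:
\begin{enumerate}
\item Since $a^\omega$ is accepted, pick $n_1$ so that some step inside $a^{n_1}$ has $\|P_{acc}(\psi_j)\|^2>0$ and $acc\ge p-2^{-1}$ after $a^{n_1}$.
\item Since $a^{n_1}bca\,a^\omega$ is accepted, pick $n_2$ the same way with $2^{-2}$, and so on.
\item The run on $x=a^{n_1}bca\,a^{n_2}bca\cdots$ agrees with these runs on every prefix. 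So it has infinitely many accepting steps, $\lim_{j\to\infty}acc_j\ge p$, and $\lim_{j\to\infty}rej_j\le 1-p<p$.
\end{enumerate}
Thus $x$ is accepted although $x\in(a^+bca)^\omega$.

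This diagonal construction is exactly the paper's proof, applied on the opposite side. The paper starts from $L_a$, recognized by the appendix example, and shows that $\overline{L_a}$ is not recognizable. All words of $\overline{L_a}$ have finitely many $a$'s, i.e.\ they end in $b^\omega$. So from an MMQBA for $\overline{L_a}$ one chooses $k_1,k_2,\ldots$ such that each block $aab^{k_i}$ contributes accepting mass. This yields an accepted word $x=aab^{k_1}aab^{k_2}\cdots$, which has infinitely many prefixes with an odd number of $a$'s and therefore lies in $L_a$.

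Your plan places the eventually-constant words on the side that must be recognizable, which is exactly the side this argument forbids. Moreover, if you grant Theorem~\ref{non} for $(a^+bca)^*$ as the paper does, neither $(a^+bca)^\omega$ nor its complement is recognizable with cutpoint above $\tfrac12$. So this pair can never witness non-closure. You need a pair like $L_a$ and $\overline{L_a}$, where the non-recognizable side is the one consisting of words with a constant tail.
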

\begin{proofsketch}
Assume for contradiction that the class of languages recognized by
MMQBA is closed under complement. Let $L_a = \lim\{\, w \in \Sigma^* \mid w \text{ has an odd number of } a\text{'s} \,\},$ 
and suppose there exists an MMQBA $M$ recognizing $\overline{L_a}$. Using the acceptance condition of $M$, we construct an infinite word $x = aab^{k_1}aab^{k_2}aab^{k_3}\cdots$ such that after reading each block $aab^{k_i}$ the run of $M$ reaches
an accepting configuration. Consequently the run of $M$ on $x$
contains infinitely many accepting visits and satisfies the MMQBA
acceptance conditions, so $M$ accepts $x$. However, the prefixes ending at the first $a$ of each block contain
$1,3,5,\ldots$ occurrences of $a$. Hence $x$ has infinitely many
prefixes with an odd number of $a$'s, implying $x \in L_a$.
Thus $M$ accepts a word belonging to $L_a$, contradicting the
assumption that $M$ recognizes $\overline{L_a}$. Therefore the class of languages recognized by MMQBA is not closed
under complement.
\end{proofsketch}
\section{Conclusion}
In this work, we introduced Measure-Many Quantum Büchi Automata (MMQBA) and
developed a formal semantic framework for their operation on infinite words.
The model combines Büchi recurrence with quantitative cutpoint conditions,
capturing the cumulative effect of repeated projective measurements during an
infinite computation. We established a language-theoretic characterization
showing that MMQBA-recognizable $\omega$-languages are precisely of the form
$\lim(L(M,p))$ for an MMQFA $M$. As part of the analysis, we developed structural
results describing the behavior of the non-halting subspace under repeated
applications of unitary transitions. Using these ideas, we proved a
non-recognizability theorem showing that certain $\omega$-languages cannot be
recognized by MMQBA. In addition, we showed that the class of
MMQBA-recognizable languages is not closed under union, intersection, complement.

We also presented two constructions of MMQBA recognizing the same $\omega$-language $a(a+b)^\omega$ using different Hilbert-space dimensions. This observation suggests natural questions regarding the minimization of MMQBA representations and the further development of MMQBA theory in analogy with classical Büchi automata. Understanding how the dimension of the underlying Hilbert space affects recognizability, developing systematic techniques for reducing the dimension of equivalent automata, investigating algorithmic decision problems such as emptiness, universality, and equivalence for MMQBA-recognizable $\omega$-languages, and exploring possible applications of MMQBA in modeling recurring behaviors in infinite quantum computations and quantum games remain interesting directions for future research.
\bibliographystyle{splncs04}
\bibliography{bibliography}
\appendix
\section*{Appendix: Additional Example and Technical Proofs}
\phantomsection

\begin{example}\label{example-fixed1}
Let $L = a(a+b)^\omega$, $\Sigma=\{a,b\}$. \\
We define the MMQBA $M = (Q,\Sigma,\delta,q_0,Q_{acc},Q_{rej})$
where
$Q=\{q_0,q_1,q_2,q_3\},
Q_{acc}=\{q_2\},
Q_{rej}=\{q_3\}$, cumulative acceptance probability is $0.88$
and $q_0$ is the initial state. The matrices are:
\[
V_a =
\begin{pmatrix}
0 & 0 & 0 & 1 \\[6pt]
\frac{2}{3} & \frac{1}{\sqrt{2}} & \frac{1}{3\cdot \sqrt{2}} & 0 \\[8pt]
\frac{2}{3} & -\frac{1}{\sqrt{2}} & \frac{1}{3\cdot \sqrt{2}} & 0 \\[8pt]
\frac{1}{3} & 0 & -\frac{4}{3\cdot \sqrt{2}} & 0
\end{pmatrix},
\qquad
V_b =
\begin{pmatrix}
0 & 0 & 0 & 1 \\[6pt]
0 & \frac{1}{\sqrt{2}} & -\frac{1}{\sqrt{2}} & 0 \\[6pt]
0 & \frac{1}{\sqrt{2}} & \frac{1}{\sqrt{2}} & 0 \\[6pt]
1 & 0 & 0 & 0
\end{pmatrix}.
\]
Let the input word $x$ be $aaabbb \dots$
\begin{itemize}
\item
$V_a |q_0\rangle = \frac{2}{3}\ket{q_1} + \frac{2}{3}\ket{q_2} + \frac{1}{3}\ket{q_3}=\psi_1$ where $x_1=a, acc_1=\frac{4}{9}, rej_1=\frac{1}{9}$
\item $V_{a}(\ket{\psi'_1}) = \frac{2}{3.\sqrt{2}}|q_1\rangle - \frac{2}{3.\sqrt{2}}|q_2\rangle $ where $x_2=aa,acc_2=\frac{4}{9}+\frac{4}{18}, rej_2=\frac{1}{9}, \psi'_1 = P_{non}(\psi_1)=\frac{2}{3} |q_1\rangle$

\item $V_{a}(\ket{\psi'_2}) = \frac{2}{3\cdot \sqrt{2}\cdot \sqrt{2}}|q_1\rangle - \frac{2}{3.\sqrt{2}\sqrt{2}}|q_2\rangle$ where $x_3=aaa,acc_3=\frac{4}{9} + \frac{4}{18} + \frac{4}{36}, rej_3=\frac{1}{9}, \psi'_2 = P_{non}(\psi_2)=\frac{2}{3\cdot \sqrt{2}} |q_1\rangle$
\item $V_{b}(\ket{\psi'_3}) = \frac{2}{3\cdot 2\cdot \sqrt{2}}|q_1\rangle + \frac{2}{3\cdot 2\cdot \sqrt{2}}|q_2\rangle$ where $x_4=aaab,acc_4=\frac{4}{9} + \frac{4}{18} + \frac{4}{36} + \frac{4}{72}, rej_4=\frac{1}{9}, \psi'_3 = P_{non}(\psi_3)=\frac{2}{3\cdot \sqrt{2}\sqrt{2}} |q_1\rangle$

\item In general, at step \( j \geq 1 \) we obtain the cumulative acceptance probability in the following way:
$$
acc_j = \sum_{k=1}^{j} \frac{4}{9 \cdot 2^{k-1}}
$$
\end{itemize}
$M$ will read infinite prefixes of $x$ and reach the final state $q_2$ infinitely many times: Since this is a geometric series:
\[
\lim_{j \to \infty} acc_j = acc_{\infty}
= \frac{4}{9} + \frac{4}{18}
+ \frac{4}{36} + \frac{4}{72} + \cdots 
=\frac{\tfrac{4}{9}}{1 - \tfrac{1}{2}}
=\tfrac{8}{9} \approx 0.88
\]
\[
\lim_{j \to \infty} rej_j = rej_{\infty}
= \frac{1}{9} =\tfrac{1}{9} \approx 0.11
\]
$M$ recognizes the input according to the definition of MMQBA.

Let the input word $x$ be $bbb \dots$, $M$ reads $bbb \dots$. Here, after reading first $b$, $M$ gets into a superposition of $q_3$ where $rej_1 = 1$ and the computation gets terminated. Now according to the definition of $MMQBA$, the input word $x$ will be rejected.
\end{example}
In the following three examples (Example 4,5,6), we provide the transition matrices defining the automaton for the language. The state transition details and probability calculations can be done similarly.
\begin{example}\label{ex3}
Let $L=(ab)^\omega$ where $\Sigma=\{a,b\}$. We define an MMQBA
$M=(Q,\Sigma,\delta,q_0,Q_{acc},Q_{rej})$ where
$Q=\{q_0,q_1,q_{acc},q_{rej}\}$,
$Q_{acc}=\{q_{acc}\}$,
$Q_{rej}=\{q_{rej}\}$,
cumulative acceptance probability is $1$, and $q_0$ is the initial state. The matrices are
\[
V_a=
\begin{pmatrix}
0&0&0&1\\
1&0&0&0\\
0&0&1&0\\
0&1&0&0
\end{pmatrix},
\qquad
V_b=
\begin{pmatrix}
0&\tfrac35&\tfrac45&0\\
0&0&0&1\\
0&\tfrac45&-\tfrac35&0\\
1&0&0&0
\end{pmatrix}.
\]
\end{example}
\begin{example}\label{ex5}
Let $L=(aab)^\omega$ where $\Sigma=\{a,b\}$. We define an MMQBA
$M=(Q,\Sigma,\delta,q_0,Q_{acc},Q_{rej})$ where
$Q=\{q_0,q_1,q_2,q_{acc},q_4,q_5\}$,
$Q_{acc}=\{q_{acc}\}$,
$Q_{rej}=\{q_4,q_5\}$,
cumulative acceptance probability is $1$, and $q_0$ is the initial state. The matrices are

\[
V_a=
\begin{pmatrix}
0&0&0&1&0&0\\
1&0&0&0&0&0\\
0&1&0&0&0&0\\
0&0&0&0&1&0\\
0&0&1&0&0&0\\
0&0&0&0&0&1
\end{pmatrix},
\qquad
V_b=
\begin{pmatrix}
0&0&\tfrac45&\tfrac35&0&0\\
0&0&0&0&1&0\\
0&0&0&0&0&1\\
0&0&\tfrac35&-\tfrac45&0&0\\
1&0&0&0&0&0\\
0&1&0&0&0&0
\end{pmatrix}.
\]
\end{example}
\begin{example}\label{ex4}
Let $L_a=\lim\{\,w\in\{a,b\}^* \mid w \text{ has an odd number of } a\text{'s}\,\}$ where $\Sigma=\{a,b\}$. We define $MMQBA$ $M=(Q,\Sigma,\delta,q_0,Q_{acc},Q_{rej})
$ where $Q=\{q_0,q_1,q_{acc},q_{rej}\},Q_{acc}=\{q_{acc}\},
Q_{rej}=\{q_{rej}\}$, cumulative acceptance probability is $1$ and $q_0$ is the initial state. The matrices are:
\[
V_a=
\begin{pmatrix}
0 & 1 & 0 & 0\\
\frac{\sqrt{3}}{2} & 0 & -\frac{1}{2} & 0\\
\frac{1}{2} & 0 & \frac{\sqrt{3}}{2} & 0\\
0 & 0 & 0 & 1
\end{pmatrix},
\qquad
V_b=
\begin{pmatrix}
1 & 0 & 0 & 0\\
0 & 1 & 0 & 0\\
0 & 0 & 1 & 0\\
0 & 0 & 0 & 1
\end{pmatrix}.
\]

\end{example}
Now we provide the detailed proof omitted from the main body of the paper.\\
\hypertarget{subspace decomposition}
\noindent\textbf{Lemma~\ref{subspace decomposition}.}
\textit{Let $M$ be an MMQBA with state space $\mathbb{C}^Q$, and let
$S_{\text{non}}$ denote the non-halting subspace.
Then there exist subspaces $S^1, S^2 \subseteq S_{\text{non}}$
such that $S_{\text{non}} = S^1 \oplus S^2$ and the following properties hold. For any $\psi \in S_{\text{non}}$ and any infinite word
$x=\sigma_1\sigma_2\cdots \in \Sigma^\omega$, let
$\psi_0,\psi_1,\psi'_1,\psi_2,\psi'_2,\ldots$
denote the computation of $M$ on input $x$ starting from $\psi$.
\begin{enumerate}
\item For every $\psi \in S^1$, the computation never leaves
$S^1$, and no halting probability is ever produced. Formally,
$\forall k \ge 1,\ \psi'_k \in S^1$ and
$\sum_{k=1}^{\infty} \|P_{\text{halt}}(\psi_k)\|^2 = 0$.
\item For every $\psi \in S^2$, the non-halting component converges to
zero. Formally,
$\lim_{k\to\infty} \|\psi'_k\| = 0$.
\end{enumerate}}
\begin{proof}
Let $V_\sigma$ be the unitary transition operator for input symbol
$\sigma \in \Sigma$, and define the projected transition operator as
$V'_\sigma := P_{\text{non}} V_\sigma$. Let $\psi \in S_{\text{non}}$ be the initial state and define the run as follows:
\[
\psi_0 := \psi,
\qquad
\psi_1 := V_{\sigma_1}\psi_0,
\qquad
\psi'_1 := P_{\text{non}}(\psi_1).
\]
For every $k \ge 2$, define recursively:
\[
\psi_k := V_{\sigma_k}\psi'_{k-1},
\qquad
\psi'_k := P_{\text{non}}(\psi_k).
\]
The state space is finite-dimensional. Since the internal state set $Q$ is finite, the Hilbert space $\mathcal{H} := \mathbb{C}^Q$ is finite-dimensional. Therefore, any descending chain of subspaces in $\mathcal{H}$ must stabilize after finitely many steps \cite{Axler}. We define a descending sequence of subspaces. Let $\mathcal{W}_0 := S_{\text{non}}$. For each $i \ge 1$, define:
$$
\mathcal{W}_i := \left\{ \psi \in \mathcal{W}_{i-1} \;\middle|\; \forall \sigma \in \Sigma,\ V_\sigma' \psi \in \mathcal{W}_{i-1} \text{ and } \|P_{\text{halt}}(V_\sigma (\psi))\|^2 = 0 \right\}
$$
Thus $\mathcal{W}_i \subseteq \mathcal{W}_{i-1} \subseteq S_{\text{non}}$.
Intuitively, $\mathcal{W}_i$ consists of all vectors in $\mathcal{W}_{i-1}$
that remain in $\mathcal{W}_{i-1}$ under all projected transitions
and produce no halting amplitude at any step. This yields a descending chain:
$$
\mathcal{W}_0 \supseteq \mathcal{W}_1 \supseteq \mathcal{W}_2 \supseteq \cdots
$$
which must stabilize at some $i_0 \ge 0$, i.e.,
$\mathcal{W}_{i_0} = \mathcal{W}_{i_0+1} = \cdots$ 

We first verify that each $\mathcal{W}_i$ is a subspace of $\mathcal{H}$.

\smallskip
\noindent
Let $0 \in \mathcal{W}_{i-1}$. Then for all $\sigma \in \Sigma$,
$
V_\sigma' 0 = 0 \in \mathcal{W}_{i-1}, \quad P_{\text{halt}} V_\sigma 0 = 0.
$. Thus $0 \in \mathcal{W}_i$. Let $\psi_1, \psi_2 \in \mathcal{W}_i$, and $\alpha \in \mathbb{C}$. Since $\mathcal{W}_{i-1}$ is a subspace $
\psi_1 + \psi_2 \in \mathcal{W}_{i-1},\alpha \psi_1 \in \mathcal{W}_{i-1}$. Moreover, for all $\sigma \in \Sigma$, 
$$V_\sigma' (\psi_1 + \psi_2) = V_\sigma' \psi_1 + V_\sigma' \psi_2 \in \mathcal{W}_{i-1}, \|P_{\text{halt}}(V_\sigma (\psi_1 + \psi_2))\|^2 = 0$$
and for any scalar $\alpha \in \mathbb{C}$, 
$$
V_\sigma' (\alpha \psi_1) = \alpha V_\sigma' \psi_1 \in \mathcal{W}_{i-1}, \|P_{\text{halt}}(V_\sigma (\alpha \psi_1))\|^2 = 0.
$$

Thus $\psi_1 + \psi_2 \in \mathcal{W}_i$ and $\alpha \psi_1 \in \mathcal{W}_i$, and therefore $\mathcal{W}_i$ is a subspace.

We now define
$
S^1 := \mathcal{W}_{i_0},
\;
S^2 := S_{\text{non}} \cap (S^1)^\perp .
$

Since the descending chain $\mathcal{W}_0 \supseteq \mathcal{W}_1 \supseteq \cdots$
stabilizes at $\mathcal{W}_{i_0}$, the subspace $S^1$ is the \emph{maximal}
subspace of $S_{\text{non}}$ such that every run starting in $S^1$
remains in $S^1$ under all projected transitions and produces no halting
amplitude.

The subspace $S^2$ is the orthogonal complement of $S^1$ inside
$S_{\text{non}}$. By the orthogonal decomposition theorem \cite{Axler}, we obtain
$
S_{\text{non}} = S^1 \oplus S^2 .
$

We show Property~1, that is, no halting from $S^1$.
We prove by induction on $k \ge 1$ that
$$
\psi'_k \in S^1
\quad \text{and} \quad
\|P_{\text{halt}}(V_{\sigma_{k+1}} \psi'_k)\|^2 = 0 .
$$
\emph{Base case} ($k = 1$):  
By assumption, $\psi_0 \in S^1 = \mathcal{W}_{i_0}$.
Since $\psi_1 = V_{\sigma_1}\psi_0$ and $\psi'_1 = P_{\text{non}}(\psi_1)$,
and because $\psi_0 \in \mathcal{W}_{i_0}$, by definition of $\mathcal{W}_{i_0}$ we have
$$
\psi'_1 = V'_{\sigma_1}\psi_0 \in \mathcal{W}_{i_0} = S^1,
\quad
\|P_{\text{halt}}(\psi_1)\|^2 = 0 .
$$
\emph{Inductive step}:  
Assume $\psi'_k \in S^1 = \mathcal{W}_{i_0}$.
By definition of $\mathcal{W}_{i_0}$, we have
$$
V'_{\sigma_{k+1}} \psi'_k \in \mathcal{W}_{i_0}
\quad \text{and} \quad
\|P_{\text{halt}}(V_{\sigma_{k+1}}\psi'_k)\|^2 = 0 .
$$
Hence, $\psi'_{k+1} = V'_{\sigma_{k+1}} \psi'_k \in S^1$. By induction, we conclude that
$
\forall k \ge 1,\quad
\psi'_k \in S^1
\ \text{and}\
\|P_{\text{halt}}(V_{\sigma_{k+1}}\psi'_k)\|^2 = 0 .
$. Therefore, the cumulative halting probability is
$
\sum_{k=1}^\infty \|P_{\text{halt}}(\psi_k)\|^2 = 0 .
$

We show Property~2: norm vanishing from $S^2$.
Let $\psi \in S^2 = S_{\text{non}} \cap (S^1)^\perp$, and let
$x = \sigma_1 \sigma_2 \cdots \in \Sigma^\omega$ be any infinite input word. Consider the run of $M$ on $x$ starting from $\psi_0=\psi$, with
pre-measurement states $\psi_k$ and post-measurement states $\psi'_k$
defined as in the MMQBA semantics.

We claim that
$
\lim_{k \to \infty} \|\psi'_k\|
=
\lim_{k \to \infty} \|P_{\text{non}}(\psi_k)\|
= 0 .
$

We proceed by contradiction.
Suppose there exists $\varepsilon > 0$ and an infinite subsequence
$\{k_j\}$ such that $
\|\psi'_{k_j}\| \ge \varepsilon
\quad \text{for all } j.$
By the definition of MMQBA computation and using unitarity of
$V_{\sigma_k}$, the measurement step satisfies the following norm identities.

$$
\text{ For }
k = 1:
\|\psi'_1\|^2
=
\|\psi_0\|^2
-
\|P_{\text{halt}}(\psi_1)\|^2 .
$$
$$\text{ For every }k \ge 2:
\|\psi'_k\|^2
=
\|V_{\sigma_k}\psi'_{k-1}\|^2
-
\|P_{\text{halt}}(V_{\sigma_k}\psi'_{k-1})\|^2
=
\|\psi'_{k-1}\|^2
-
\|P_{\text{halt}}(\psi_k)\|^2 .
$$
Summing the norm identities from $k=1$ to $k=N$, we obtain
\begin{align*}
\|\psi'_1\|^2
&= \|\psi_0\|^2 - \|P_{\text{halt}}(\psi_1)\|^2,\\
\|\psi'_2\|^2
&= \|\psi'_1\|^2 - \|P_{\text{halt}}(\psi_2)\|^2,\\
\|\psi'_3\|^2
&= \|\psi'_2\|^2 - \|P_{\text{halt}}(\psi_3)\|^2,\\
&\ \vdots \\
\|\psi'_N\|^2
&= \|\psi'_{N-1}\|^2 - \|P_{\text{halt}}(\psi_N)\|^2 .
\end{align*}

Adding all these equalities, all intermediate terms cancel telescopically, yielding
$
\|\psi'_N\|^2
=
\|\psi_0\|^2
-
\sum_{i=1}^N \|P_{\text{halt}}(\psi_i)\|^2 .
$ Rearranging, we obtain
$
\sum_{i=1}^N \|P_{\text{halt}}(\psi_i)\|^2
=
\|\psi_0\|^2
-
\|\psi'_N\|^2 .
$ Taking the limit as $N \to \infty$, we get
$$
\sum_{i=1}^\infty \|P_{\text{halt}}(\psi_i)\|^2
=
\|\psi_0\|^2
-
\lim_{k\to\infty}\|\psi'_k\|^2 .
$$

If there exists $\varepsilon>0$ and an infinite subsequence $\{k_j\}$ such that
$\|\psi'_{k_j}\|\ge \varepsilon$, then
$$
\lim_{k\to\infty}\|\psi'_k\|^2 \ge \varepsilon^2 > 0,
\text{ and hence }
\sum_{i=1}^\infty \|P_{\text{halt}}(\psi_i)\|^2
<
\|\psi_0\|^2 .
$$

This contradicts the defining property of $S^2$, which requires that every
nonzero vector in $S^2$ contributes nontrivially to the halting amplitude over
time. Therefore,
$
\lim_{k\to\infty}\|\psi'_k\| = 0 .
$
\end{proof}

\noindent\textbf{Corollary~\ref{corollary11}.}
\textit{
Let $\psi = \phi_1 + \phi_2$, where $\phi_1 \in S^1$ and $\phi_2 \in S^2$.
Let the run of $M$ on $x = \sigma_1 \sigma_2 \sigma_3 \cdots$ be defined as in an MMQBA, and let $\psi_k$ denote the pre-measurement state at step $k$. Then, for all $k \ge 1$, $\| P_{\text{halt}}(\psi_k) \|^2
=
\| P_{\text{halt}}(V_{\sigma_k} P_{\text{non}} V_{\sigma_{k-1}} \cdots
P_{\text{non}} V_{\sigma_1}(\phi_2)) \|^2.$
}
\begin{proof}
By linearity of the unitary operators $V_{\sigma_i}$ and the projection
$P_{\text{halt}}$, we have
\[
P_{\text{halt}}(\psi_k)
=
P_{\text{halt}}(V_{\sigma_k} P_{\text{non}} \cdots V_{\sigma_1}(\phi_1))
+
P_{\text{halt}}(V_{\sigma_k} P_{\text{non}} \cdots V_{\sigma_1}(\phi_2)).
\]

By Lemma~\ref{subspace decomposition}, for every $k \ge 1$, the component originating from
$\phi_1 \in S^1$ never contributes to halting probability, and hence $\|P_{\text{halt}}(V_{\sigma_k} P_{\text{non}} \cdots V_{\sigma_1}(\phi_1))\|^2 = 0$.

Therefore, 
$$\| P_{\text{halt}}(\psi_k) \|^2
=
\| P_{\text{halt}}(V_{\sigma_k} P_{\text{non}} V_{\sigma_{k-1}} \cdots
P_{\text{non}} V_{\sigma_1}(\phi_2)) \|^2 .
$$
as claimed.
\end{proof}

\hypertarget{convergence}
\noindent\textbf{Theorem~\ref{convergence}.}
\textit{Let $x = \sigma_1 \sigma_2 \cdots \in \Sigma^\omega$ and $M$ be an MMQBA. If
    $\|P_{acc}(\psi_j)\|^2 > 0$ for infinitely many $j$ and
    $\sum_{j=1}^\infty \|P_{rej}(\psi_j)\|^2 = 0$,
then the cumulative acceptance probability converges to 1, that is $\lim_{j \to \infty} \text{acc}_j=1$}
\begin{proof}
We show that the cumulative probability of accepting prefixes converges to 1, that is $\lim_{j \to \infty} \text{acc}_j = 1.$
Let $\psi_0 = |q_0\rangle$ be the initial quantum state. 

We recall the intermediate (pre-measurement) state as $\psi_j$, and the
post-measurement non-halting state as $\psi'_j$. The halting probabilities
at step $j$ are defined by
$
\alpha_j := \|P_{acc}(\psi_j)\|^2,
\;
\rho_j := \|P_{rej}(\psi_j)\|^2,
$
and the cumulative probabilities by
$
acc_j := \sum_{i=1}^j \alpha_i,
\;
rej_j := \sum_{i=1}^j \rho_i.
$ For the first step, since $\psi_0 \in S_{\text{non}}$,
\[
\|\psi'_1\|^2 = \|\psi_0\|^2 - \alpha_1 - \rho_1 .
\]

For every $j \ge 2$, the evolution satisfies the norm conservation identity
\[
\|\psi'_j\|^2
=
\|\psi'_{j-1}\|^2
-
\alpha_j
-
\rho_j .
\]

By induction on $j$, it follows that
\[
\|\psi'_j\|^2
=
1
-
acc_j
-
rej_j .
\]

We are given: (i) \( \alpha_j > 0 \) for infinitely many \( j \in \mathbb{N} \) (i.e., accepting states are visited infinitely often), and (ii) \( \sum_{j=1}^\infty \rho_j = 0 \) (i.e., no amplitude is ever lost to rejecting states). From (ii) it follows that \( \rho_j = 0 \) for all \( j \), so \( \text{rej}_j = 0 \), and hence \( \|\psi'_j\|^2 = 1 - \text{acc}_j \).

The total acceptance probability is distributed across infinitely many steps. As the computation never halts (it resumes from the non-halting subspace at each step), this mass gradually accumulates in accepting states over time rather than collapsing at a finite point. Thus, the cumulative acceptance
$$
acc_j = \sum_{i=1}^j \alpha_i \to 1\text{ as } j \to \infty,\text{ i.e., } \lim_{j \to \infty} acc_j = 1, \text{ as required}.
$$

We now argue probabilistically. Let \( E_i \) be the event that the MMQBA halts in an accepting state exactly after reading symbol \( \sigma_i \), so \( P(E_i) = \alpha_i \). Since computation proceeds from the non-halting state at each step, the state at time \( i \) depends on prior steps, and thus the events \( E_1, E_2, \dots \) are not independent.

Define disjoint events \( F_n := E_n \cap \bigcap_{i=1}^{n-1} E_i^c \), denoting acceptance occurs first at step \( n \). These are mutually disjoint and satisfy:
$$
\bigcup_{n=1}^{\infty} F_n = \{ \text{eventual acceptance} \}, 
P\left( \bigcup_{n=1}^{\infty} F_n \right) = \sum_{n=1}^{\infty} P(F_n)$$ 
We have 
$$
P(F_n) = \left( \prod_{i=1}^{n-1}(1 - \alpha_i) \right) \cdot \alpha_n, 
 \text{ So }
\sum_{n=1}^\infty P(F_n) = \sum_{n=1}^\infty \left( \prod_{i=1}^{n-1}(1 - \alpha_i) \cdot \alpha_n \right)
$$
Now assume the MMQBA visits accepting states infinitely often, i.e., \( \alpha_i > 0 \) for infinitely many \( i \). Then:
$$
\prod_{i=1}^\infty (1 - \alpha_i) = 0 \;\Rightarrow\;
P\left( \bigcap_{i=1}^\infty E_i^c \right) = 0 \;\Rightarrow\;
P\left( \bigcup_{n=1}^\infty E_n \right) = 1 - P\left( \bigcap_{n=1}^\infty E_n^c \right) = 1.
$$
Note that \( \bigcup_{n=1}^\infty E_n = \bigcup_{n=1}^\infty F_n \), since the \( F_n \) form a disjoint partition. Therefore,
$$
P\left( \bigcup_{n=1}^\infty F_n \right) = \sum_{n=1}^\infty P(F_n) = 1 \quad \Rightarrow \quad \sum_{i=1}^\infty \alpha_i = 1.
$$
Also, since \( \sum_{j=1}^\infty \rho_j = 0 \), it follows that \( \rho_j = 0 \) for all \( j \), i.e., no amplitude is lost to rejection. Hence,
$
\text{rej}_j = 0 \text{ for all } j \; \text{ and } \|\psi'_j\|^2 = 1 - \text{acc}_j.
$. Since \( \alpha_j > 0 \) for infinitely many \( j \), and the disjoint union \( \bigcup_{n=1}^\infty F_n \) implies
$
\sum_{j=1}^\infty \alpha_j = 1.
$

Thus, in both formulations, the cumulative acceptance probability satisfies
$$
acc_j = \sum_{i=1}^j \alpha_i \to 1\text{ as } j \to \infty,\text{ i.e., } \lim_{j \to \infty} acc_j = 1, \text{ as required}.
$$
\end{proof}


\hypertarget{no}
\noindent\textbf{Lemma~\ref{no}.}
\textit{Let $M$ be an MMQBA with basis state set $Q$ and
$S \subseteq S_{\text{non}}$ be a $\sigma$-cycle subspace.
If any basis vector $|r\rangle \in Q$ with 
$|r\rangle \notin S$, then
$P_S\!\big(V_\sigma |r\rangle\big)=0.$
}
\begin{proof}
Let $V_\sigma=(v_{s,t})_{s,t=0}^n$ be the matrix of $V_\sigma$ in the
computational basis, where
$
v_{s,t}=\langle q_s|V_\sigma|q_t\rangle .
$. Since $V_\sigma$ is unitary, each column and each row has squared norm $1$, formally,
$\sum_{s=0}^n |v_{s,t}|^2 = 1$ for all $t$, 
$\sum_{t=0}^n |v_{s,t}|^2 = 1$ for all $s$.

Let $I \subseteq \{0,\dots,n\}$ be the set of indices such that
$|q_i\rangle \in S$.
The assumption $V_\sigma |q_i\rangle \in \operatorname{span}(S)
\quad \text{for all } i \in I$
which means that $v_{s,i}=0 \text{for all } s \notin I,\; i \in I.$ Fix any $i \in I$. All nonzero entries of column $i$ lie inside rows indexed by $I$, and therefore $\sum_{s\in I} |v_{s,i}|^2 = 1.$

Now suppose, for contradiction, there exists
$r \notin I$ such that $P_S\bigl(V_\sigma |q_r\rangle\bigr)\neq 0$. Then there exists some $s_0 \in I$ with $|v_{s_0,r}|^2 > 0.$ Consider the squared norm of row $s_0$. Since $s_0 \in I$, the contribution from columns $i \in I$ already satisfies
$\sum_{i\in I} |v_{s_0,i}|^2 = 1$.
Thus, $\sum_{t=0}^n |v_{s_0,t}|^2
\ge
\sum_{i\in I} |v_{s_0,i}|^2 + |v_{s_0,r}|^2>1$ 
which contradicts the row-norm condition.
This contradiction shows that no such $r$ can exist, and therefore
$
P_S\bigl(V_\sigma |q_r\rangle\bigr)=0$ for all $r \notin I$. 
\end{proof}

\hypertarget{limit-characterization}
\noindent\textbf{Theorem~\ref{limit-characterization}.}
\textit{For every $L' \subseteq \Sigma^\omega$, there exist an MMQBA  $M'$ and a cutpoint  $p' > \tfrac12$ such that $L' = L(M',p')
$ iff there exist an MMQFA $M$ and a cutpoint $p >\tfrac12$  such that $L' = \lim\!\bigl(L(M,p)\bigr)$
}
\begin{proof}
We prove both directions of the equivalence:
$$L(M,p) = L \;\;\Longleftrightarrow\;\; L(M',p') = \lim(L).
$$
Let $M=(Q,\Sigma,\delta,q_0,Q_{acc},Q_{rej})$ be an
MMQFA with cutpoint $p>\tfrac12$, and let $L=L(M,p)$.
We construct an MMQBA
$M'=(Q',\Sigma,\delta',q'_0,Q'_{acc},Q'_{rej})$
with cutpoint $p'>\tfrac12$ such that $L(M',p')=\lim(L)$.

The construction preserves the state space and the accepting/rejecting
subspaces, i.e.,
$Q'=Q$, $Q'_{acc}=Q_{acc}$, $Q'_{rej}=Q_{rej}$, and $q'_0=q_0$.
The only difference lies in the transition structure: unlike $M'$,
the automaton $M$ includes a right end-marker $\$$, and hence the
transition function $\delta$ extends $\delta'$ by incorporating the
action of the end-marker.

We construct $M'$ so that the effect of $V_{\$}$ is distributed
over infinitely many steps. For each $\sigma \in \Sigma$, define
unitary operators $V'_\sigma$ so that the following holds.

Let $u \in \Sigma^*$ be a prefix, and let $\psi_u$ denote the
state reached by $M'$ after reading $u$. Then $\psi_u$ may have
components in both halting and non-halting subspaces. For each $u$,
a fraction $\varepsilon \in (0,1)$ of the current non-halting probability is
transferred to the halting subspaces, and this transferred part is
distributed between $S_{acc}$ and $S_{rej}$ according to the
acceptance–rejection direction induced by $V_{\$}$. The remaining $(1-\epsilon)$
probability continues in $S_{{non}}$.

This distribution defines a transformation on $S_{{non}}$
that preserves the structure of superpositions (linearity) and
the total probability (norm preservation), since an $\varepsilon$
fraction is transferred to the halting subspaces and the remaining
$(1-\varepsilon)$ fraction stays in $S_{{non}}$. Hence it
can be realized by a unitary operator on the whole space.

\medskip

\medskip

Let $x \in \lim(L)$. Then $x$ has infinitely many prefixes
\[
u_1 \prec u_2 \prec u_3 \prec \cdots
\quad\text{with}\quad
u_j \in L.
\]

Let $r_j$ denote the total non-halting probability remaining
after processing the prefix $u_j$, with $r_0 = 1$. By construction,
each time a prefix $u_j \in L$ is processed, a fraction $\varepsilon$
of the current non-halting probability is transferred to the
halting subspaces. Hence
\[
r_1 \le (1-\varepsilon), \quad
r_2 \le (1-\varepsilon)^2, \quad \dots,
\]
and in general,
$
r_j \le (1-\varepsilon)^j.
$ Since $0 < 1-\varepsilon < 1$, we have
$
\lim_{j \to \infty} r_j = 0.
$
Thus the total halting probability tends to $1$, i.e.,
$
\lim_{j \to \infty} (acc_j + rej_j) = 1.
$

Moreover, for each $j$, since $u_j \in L$, the probability
transferred to the halting subspaces at step $j$ is biased
towards acceptance. Hence the increment in acceptance at each
step is strictly greater than the corresponding increment in
rejection, and therefore for all $j$, $acc_j > rej_j.$ Taking limits and using $acc_j + rej_j \to 1$, we obtain
$
\lim_{j \to \infty} acc_j > \tfrac12,
\;
\lim_{j \to \infty} rej_j < \tfrac12.
$ Thus there exists $p' > \tfrac12$ such that $x \in L(M',p').$

\textbf{($\Leftarrow$)}
Let $M'=(Q,\Sigma,\delta',q_0,Q_{acc},Q_{rej})$ be an MMQBA with cutpoint $p'>\tfrac12$ such that $L(M',p')=\lim(L)$. We construct an MMQFA
$M=(Q,\Sigma \cup \{\$\},\delta,q_0,Q_{acc},Q_{rej})$
such that $L(M,p)=L$ for some $p>\tfrac12$.

The construction preserves the state space and the accepting/rejecting
subspaces, i.e.,
$Q'=Q$, $Q'_{acc}=Q_{acc}$, $Q'_{rej}=Q_{rej}$, and $q'_0=q_0$.
The only difference lies in the transition structure: unlike $M'$,
the automaton $M$ includes a right end-marker $\$$, and hence the
transition function $\delta$ extends $\delta'$ by incorporating the
action of the end-marker.

Fix $\varepsilon \in (0,1)$ such that $1-\varepsilon > \tfrac12$.
For each $\sigma \in \Sigma$, we need to define unitary operators $V_\sigma$
so that, during the processing of any input $u$, at most an
$\varepsilon$ fraction of the probability is transferred to the
halting subspaces, and at least $(1-\varepsilon)$ remains in
$S_{{non}}$. This specification preserves the structure of superpositions (linearity) and the total probability (norm preservation), since at most an $\varepsilon$ fraction is transferred to the halting
subspaces and at least $(1-\varepsilon)$ remains in
$S_{{non}}$. Hence such transformations can be realized
by unitary operators.

\medskip

Let $u \in \Sigma^*$ and let $\psi_u$ be the state of $M$ after
reading $u$ (before applying $V_{\$}$). We have 
$\psi_u = P_{{non}}(\psi_u) + P_{acc}(\psi_u) + P_{rej}(\psi_u),$
where $P_{{non}}, P_{acc}, P_{rej}$ are the projections onto
$S_{{non}}, S_{acc}, S_{rej}$, respectively. By construction,
$$
\|P_{{non}}(\psi_u)\|^2 \ge (1-\varepsilon) > \tfrac12,
\text{ and }
\|P_{acc}(\psi_u)\|^2 + \|P_{rej}(\psi_u)\|^2 \le \varepsilon.
$$

\medskip

Upon reading the end-marker $\$$, the transformation $V_{\$}$
transfers the entire remaining non-halting probability to the
halting subspaces. Thus,
$$
\|P_{{non}}(V_{\$}\psi_u)\|^2 = 0,
\;
\|P_{acc}(V_{\$}\psi_u)\|^2 + \|P_{rej}(V_{\$}\psi_u)\|^2 = 1.
$$

The transfer induced by $V_{\$}$ is defined so that the resulting
acceptance–rejection distribution reflects whether $u \in L$ or
$u \notin L$.

Since $\|P_{{non}}(\psi_u)\|^2 > \tfrac12$, more than half of the
total probability resides in the non-halting component before applying
$V_{\$}$. As $V_{\$}$ transfers the entire non-halting probability to the
halting subspaces according to the acceptance–rejection behavior of $u$,
this majority determines the final outcome. Hence, if $u \in L$, the
acceptance probability exceeds $\tfrac12$, and if $u \notin L$, it is at most $\tfrac12$. Therefore, there exists $p > \tfrac12$ such that
$
u \in L \;\Longleftrightarrow\; u \in L(M,p).
$

\end{proof}

\hypertarget{non}
\noindent\textbf{Theorem~\ref{non}.}
\textit{
If $L \subseteq \Sigma^*$ is a $\sigma$-block forcing language then there exists no MMQBA $M'$ with cutpoint $p' > \frac{1}{2}$ such that $L(M',p')=\lim(L)$.
}

\begin{proof}
Assume for contradiction that an MMQBA
$M'=(Q,\Sigma,\delta,q_0,Q_{acc},Q_{rej})$
recognizes $\lim(L)$. Since $L$ is a $\sigma$-block forcing language, every word
$x\in\lim(L)$ has the form
\[
x=\sigma^{k_1}u\sigma\sigma^{k_2}u\sigma\sigma^{k_3}u\sigma\cdots
\]
with $k_i\ge 1$ for all $i$, $u\in\Sigma^+$, and $u\notin\sigma^+$. Hence the symbol $\sigma$ occurs in arbitrarily long blocks, and the transition
$V_\sigma$ is applied infinitely many times during the run of $M'$ on $x$.
Recall $S_{\text{non}}=\operatorname{span}\{|q\rangle : q\notin Q_{halt}\}$ and $Q\setminus Q_{halt}$
is a finite set. For each non-halting state $\psi_{i,m}$ arising during the run, we write, 
$$
\psi_{i,m}
=
\sum_{q \in Q_{\text{non}}}
\alpha_q^{(i,m)} \, |q\rangle,
\text{ where } Q_{\text{non}} = Q \setminus Q_{\text{halt}}
$$
We define 
$$
I_{i,m}
:=
\{\, q \in Q_{\text{non}} : \alpha_q^{(i,m)} \neq 0 \,\}
$$
and set
$$
T_{i,m}
:=
\operatorname{span}
\{\, |q\rangle : q \in I_{i,m} \,\}
\subseteq S_{\text{non}}.
$$

Each $T_{i,m}$ is a subspace spanned by a subset of the finite set
$\{ |q\rangle : q \in Q_{\text{non}} \}$.

Since $Q_{\text{non}}$ is finite, there are only finitely many possible
subsets $I_{i,m}$, and hence only finitely many possible subspaces
$T_{i,m}$. Denote these distinct subspaces by
$S_1,\dots,S_K \subseteq S_{\text{non}}$.
During the infinitely many applications of $V_\sigma$ along the
$\sigma$-blocks of $x$, we obtain an infinite sequence
$T_{1,1}, T_{1,2}, \dots, T_{2,1}, \dots$
taking values in the finite set $\{S_1,\dots,S_K\}$.

We define
$$
B_\infty
:=
\{\, q \in Q_{\text{non}} :
q \in I_{i,m} \text{ for infinitely many pairs } (i,m) \,\}.
$$
That is, $B_\infty$ consists of those basis states that occur with
nonzero amplitude in infinitely many non-halting states
$\psi_{i,m}$. Since the sequence of states is infinite and $Q_{\text{non}}$ is finite,
the set $B_\infty$ is nonempty. Now define
$$
S^{(1)}
:=
\operatorname{span}
\{\, |q\rangle : q \in B_\infty \,\}
\subseteq S_{\text{non}}.
$$
We claim that
$
V_\sigma(S^{(1)}) \subseteq S^{(1)}.
$
 
Indeed, $q \in B_\infty$.
By definition, $q$ appears in infinitely many supports $I_{i,m}$.
For each such occurrence we have
$\psi_{i,m+1}=P_{\text{non}}(V_\sigma \psi_{i,m}).$ If $V_\sigma |q\rangle$ had a nonzero component on some basis state
$r \notin B_\infty$, then $r$ would appear in infinitely many
subsequent non-halting states $\psi_{i,m+1}$,
contradicting the definition of $B_\infty$.
Hence $V_\sigma |q\rangle \in S^{(1)}$ for every $q \in B_\infty$,
and therefore
$
V_\sigma(S^{(1)}) \subseteq S^{(1)}
$. Thus $S^{(1)}$ is a nonzero $V_\sigma$-cycle subspace of
$S_{\text{non}}$.

We now claim that for every $\psi \in S^{(1)}$,
$
\|P_{acc}(V_\sigma \psi)\|^2 = 0.
$

Suppose for contradiction that there exists $\psi \in S^{(1)}$ such that
$
\|P_{acc}(V_\sigma \psi)\|^2 > 0.
$
Since $V_\sigma(S^{(1)}) \subseteq S^{(1)}$, repeated applications of
$V_\sigma$ keep the computation inside $S^{(1)}$. Thus along the infinite
input $\sigma^\omega$, acceptance leakage occurs infinitely many times,
causing the cumulative acceptance probability to exceed the cutpoint.
Hence the automaton would accept $\sigma^\omega$, contradicting
$\sigma^\omega \notin \lim(L)$. Therefore
\[
\|P_{acc}(V_\sigma \psi)\|^2 = 0
\quad \text{for all } \psi \in S^{(1)}.
\]

Note that rejecting components inside $S^{(1)}$ do not lead to a contradiction.
It may happen that $\|P_{rej}(V_\sigma \psi)\|^2 > 0$ for some
$\psi \in S^{(1)}$, which is consistent with rejection of $\sigma^\omega$.
Hence rejecting components may lie in the $V_\sigma$-cycle subspace $S^{(1)}$.

Now consider the configuration after reading the block $\sigma^{k_1}u$.
Let $\psi_0$ denote the non-halting state reached after reading
$\sigma^{k_1}$ and set $\phi = V_u(\psi_0)$.
We can write the orthogonal decomposition
$\phi = P_{acc}(\phi) + P_{rej}(\phi) + P_{\text{non}}(\phi)$.
If $P_{\text{non}}(\phi)=0$, then the computation halts after reading
$\sigma^{k_1}u$ and cannot process the remaining infinite input,
contradicting the acceptance semantics for infinite words.
Otherwise $P_{\text{non}}(\phi)\neq 0$, and the computation continues
from the non-halting residue $\psi := P_{\text{non}}(\phi)\in S_{\text{non}}$.

Consider now the terminal $\sigma$ of the first block, i.e., the prefix
$\sigma^{k_1}u\sigma$. Let $\psi_1$ denote the non-halting state reached
immediately before reading this $\sigma$. Since $\sigma^{k_1}u\sigma$ is a
prefix of some word in $L$, the acceptance–rejection bias satisfies
$\|P_{acc}(V_\sigma \psi_1)\|^2 >
\|P_{rej}(V_\sigma \psi_1)\|^2$, and in particular
$\|P_{acc}(V_\sigma \psi_1)\|^2 > 0$. After measurement the
computation continues from the non-halting residue
$\psi_1' := P_{\text{non}}(V_\sigma \psi_1) \neq 0$.

We claim that $\psi_1' \notin S^{(1)}$. Indeed, every $\psi \in S^{(1)}$
satisfies $\|P_{acc}(V_\sigma \psi)\|^2 = 0$, whereas
$V_\sigma\psi_1$ has a nonzero accepting component, hence its non-halting
residue cannot lie in $S^{(1)}$. By the no-entry lemma~\ref{no}, since
$V_\sigma(S^{(1)}) \subseteq S^{(1)}$, we have
$P_{S^{(1)}}(V_\sigma \phi) = 0$ for all $\phi \notin S^{(1)}$.
Therefore, under further applications of $V_\sigma$, the computation
starting from $\psi_1'$ can never re-enter $S^{(1)}$.

Now consider the next $\sigma$-block $\sigma^{k_2}$. As before, repeated
applications of $V_\sigma$ force the non-halting evolution to stabilize
inside a nonzero $V_\sigma$-invariant subspace $S^{(2)} \subseteq
S_{\text{non}}$ with $S^{(2)} \cap S^{(1)} = \{0\}$ and
$V_\sigma(S^{(2)}) \subseteq S^{(2)}$. Repeating this argument for each
subsequent block $\sigma^{k_m}$ yields distinct nonzero subspaces
$S^{(m)} \subseteq S_{\text{non}}$ such that
\[
V_\sigma(S^{(m)}) \subseteq S^{(m)}, \qquad
S^{(m)} \cap S^{(\ell)} = \{0\} \quad (m \neq \ell).
\]

Since the input contains infinitely many $\sigma$-blocks, this produces
infinitely many pairwise disjoint non-halting subspaces inside the
finite-dimensional space $S_{\text{non}}$, which is impossible.
\end{proof}

\hypertarget{cor1}
\noindent\textbf{Corollary~\ref{cor1}.}
\textit{
If $L \subseteq \Sigma^*$ is a $\sigma$-block forcing language, then
there exists no MMQFA $M$ with cutpoint $p > \tfrac12$ such that
$L(M,p)=L$.
}
\begin{proof}
Suppose, for contradiction, that there exists an MMQFA $M$ with
cutpoint $p>\tfrac12$ such that $L(M,p)=L$.
By Theorem~\ref{limit-characterization}, there exists an MMQBA $M'$
with cutpoint $p'>\tfrac12$ such that $L(M',p')=\lim(L)$.
However, by Theorem~\ref{non}, no MMQBA with cutpoint greater than
$\tfrac12$ can recognize $\lim(L)$ when $L$ is a $\sigma$-block forcing
language, yielding a contradiction.
\end{proof}

\hypertarget{cor2}
\noindent\textbf{Corollary~\ref{cor2}.}
\textit{
If $L\subseteq\Sigma^*$ is a finite language, then
there exist an MMQFA $M$ and an MMQBA $M'$ such that $L(M,1)=L$ and $L(M',1)=\lim(L)$.}
\begin{proof}
Let $N=\max\{|x|:x\in L\}$. Define an MMQFA
$M=(Q,\Sigma,\delta,q_0,Q_{acc},Q_{rej})$ with
$Q=\{q_0,q_1,\dots,q_N,q_{acc},q_{rej}\}$,
$Q_{acc}=\{q_{acc}\}$, $Q_{rej}=\{q_{rej}\}$,
and $q_0,\dots,q_N$ non-halting. For each position $i \in \{1,\dots,N\}$ and symbol $\sigma \in \Sigma$,
define the transition so that $V_\sigma \ket{q_{i-1}} = \ket{q_i}$.
Thus the automaton counts the length of the input up to $N$. After reading a word $x=\sigma_1\cdots\sigma_n$ with $n \le N$, the state is $\ket{q_n}$.
Upon reading the right end-marker $\$$, define
$$
V_{\$}\ket{q_n} =
\begin{cases}
\ket{q_{{acc}}}, & \text{if } x \in L,\\
\ket{q_{{rej}}}, & \text{otherwise}.
\end{cases}
$$
Thus $M$ accepts exactly the words in $L$ with probability $1$,
so $L(M,1)=L$.

Since $L$ is finite, $\lim(L)=\emptyset$. Define an MMQBA $M'=(Q',\Sigma,\delta',q_0,Q'_{acc},Q'_{rej})$
by $Q'=\{q_0\}$, $Q'_{acc}=\emptyset$, $Q'_{rej}=\emptyset$,
and $V'_\sigma\ket{q_0}=\ket{q_0}$ for all $\sigma\in\Sigma$.
Since no accepting states exist, the cumulative acceptance probability
is always $0$, and hence $L(M',1)=\emptyset=\lim(L)$.
\end{proof}
\begin{remark}
The finiteness of $L$ is not essential for the triviality of $\lim(L)$. There exist infinite languages $L\subseteq\Sigma^*$ such that
$
\lim(L)=\emptyset$. In such cases, an MMQFA recognizing $L$ with cutpoint $p>\tfrac12$ may exist,
but the corresponding MMQBA construction for $\lim(L)$ is trivial. Indeed, whenever $\lim(L)=\emptyset$, one can construct an MMQBA. 
\end{remark}
\hypertarget{union}
\noindent\textbf{Theorem~\ref{union}.}
\textit{
The class of languages recognized by MMQBA is not closed under 
union.
}
\begin{proof}
Let
\[
L_1=(ab)^\omega,
\qquad
L_2=(aab)^\omega .
\]
Both $L_1$ and $L_2$ are recognizable by MMQBA(See Example~4,5 in the Appendix). Assume for contradiction that
there exists an MMQBA $M$ with cutpoint $p>\tfrac12$ such that
\[
L(M,p)=L_1\cup L_2 .
\]

By Lemma~\ref{subspace decomposition}, the non-halting subspace decomposes as
\[
S_{\text{non}} = S^1 \oplus S^2 .
\]
The initial state must have a nonzero component in $S^2$, otherwise the
computation remains entirely in $S^1$ and no halting probability is ever
generated, implying $L(M,p)=\emptyset$. Hence we assume the run starts from a
state $\psi_0\in S^2$.

Consider the runs of $M$ on the words
\[
x_1=(ab)^\omega
\qquad\text{and}\qquad
x_2=(aab)^\omega .
\]
Since both words belong to $L(M,p)$, their runs must satisfy the Büchi
condition and the cutpoint conditions. In particular, accepting projections
must occur infinitely often, and therefore the run must visit states in $S^2$
infinitely often.

Let $\psi_{aa}$ denote the non-halting state reached after reading the prefix
$aa$ from the initial state:
\[
\psi_{aa} = P_{\text{non}}V_aP_{\text{non}}V_a(\psi_0).
\]

First consider the behavior required to accept $x_1=(ab)^\omega$.  
The prefix $aa$ does not occur in $x_1$. Hence if $\psi_{aa}$ had a nonzero
component in $S^2$, then continuing the computation from $\psi_{aa}$ along the
input $(ab)^\omega$ would again produce infinitely many accepting projections,
because states in $S^2$ generate halting probability under the transition
operators. Consequently words beginning with the prefix $aa$ followed by
$(ab)^\omega$ would also satisfy the acceptance conditions, contradicting
$L(M,p)=L_1\cup L_2$. Therefore it must hold that
\[
\psi_{aa}\in S^1 .
\]

Now consider the run of $M$ on $x_2=(aab)^\omega$.  
The prefix $aa$ occurs infinitely often in $x_2$. Since the word must be
accepted, the run must continue producing accepting projections after each
occurrence of this prefix. Thus the state reached after reading $aa$ must lie
in $S^2$, because only states in $S^2$ can generate halting amplitude
infinitely often. Hence
\[
\psi_{aa}\in S^2 .
\]

This yields the contradiction
\[
\psi_{aa}\in S^1 \cap S^2 ,
\]
which is impossible since $S^1\cap S^2=\{0\}$. Therefore no MMQBA can recognize
$L_1\cup L_2$, and the class of MMQBA-recognizable $\omega$-languages is not
closed under union.
\end{proof}

\hypertarget{intersection}
\noindent\textbf{Theorem~\ref{intersection}.}
\textit{
The class of languages recognized by MMQBA is not closed under intersection.
}
\begin{proof}
Let \( \Sigma = \{a,b\} \), and define the following $\omega$-languages:
\[
L_a = \lim\{\, w \in \Sigma^* \mid w \text{ has an odd number of } a\text{'s} \,\},
\]
\[
L_b = \lim\{\, w \in \Sigma^* \mid w \text{ has an odd number of } b\text{'s} \,\},
\]
and let \(L = L_a \cap L_b\).

Observe that both $L_a$ and $L_b$ are recognizable by MMQBA.
In particular, Example~5 in the Appendix presents an MMQBA
$M_a$ recognizing $L_a$. By a symmetric construction
(replacing the roles of $a$ and $b$), we obtain an MMQBA
$M_b$ recognizing $L_b$.

Assume for contradiction that the class of languages recognized by
MMQBA is closed under intersection. Then there exists an MMQBA $M$
with cutpoint $p>\tfrac12$ such that
\[
L(M,p)=L_a \cap L_b = L .
\]

Hence for all integers $k,\ell \ge 1$ we have
\[
(a^k b^\ell)^\omega \in L(M,p),
\]
while for every $t\ge1$
\[
a^{2t} b^\omega \notin L(M,p).
\]

We will show that
\[
\forall k,\ell \ge 1,\ (a^k b^\ell)^\omega \in L(M,p)
\;\Rightarrow\;
\exists t \ge 1 \text{ such that } a^{2t} b^\omega \in L(M,p),
\]
which contradicts the assumption that
$a^{2t} b^\omega \notin L(M,p)$.

\medskip

To analyze the behavior of $M$, consider the run of $M$ on the
ultimately periodic word
$
w = (a^k b^\ell)^\omega .
$ Let $\chi_i$ denote the pre-measurement state at step $i$ and
$\chi_i' = P_{\text{non}}\chi_i$ the post-measurement non-halting
state. The evolution of the run is
\[
\chi_0 \xrightarrow{a^k} \chi_1' \xrightarrow{b^\ell} \chi_2'
\xrightarrow{a^k} \chi_3' \xrightarrow{b^\ell} \chi_4' \cdots .
\]

Since $w \in L$ and $L(M,p)=L$, the run of $M$ on $w$ satisfies the
MMQBA acceptance condition. By Lemma~\ref{subspace decomposition} we have the decomposition
$
S_{\text{non}} = S^1 \oplus S^2 .
$

Consider the behavior of the automaton during the $a$-segments of the
input. During these segments the same unitary transition $V_a$ is
applied repeatedly. As in the analysis of $\sigma$-block forcing
languages (Theorem~\ref{non}), repeated applications of the same
unitary force the computation to remain inside a $V_a$-cycle subspace
$S^1 \subseteq S_{\text{non}}$ satisfying
$
V_a(S^1) \subseteq S^1 .
$

If an accepting projection occurred during an $a$-segment, then the
same accepting behavior would repeat on the input $a^\omega$.
This would imply $a^\omega \in L(M,p)$, which contradicts
$a^\omega \notin L_b$ and therefore $a^\omega \notin L$.
Hence no accepting projection occurs during the $a$-segments.

When the $b$-segment begins, the transition $V_b$ moves amplitude
from $S^1$ into the subspace $S^2$. Accepting amplitude is
generated from this subspace, so accepting projections may occur
during the $b$-segments of the run. Since the word
$(a^k b^\ell)^\omega$ is accepted, such accepting events occur
infinitely often.

Moreover, the action of $V_b$ alternates the computation between
the two subspaces. In particular,
\[
S^1 \xrightarrow{V_b} S^2 \xrightarrow{V_b} S^1 .
\]
Thus during each block $a^k b^\ell$ the run moves from $S^1$ to
$S^2$ when the $b$-segment begins, and later returns to $S^1$
under further applications of $V_b$ (for example at the end of the
$b$-block). Consequently the computation repeatedly cycles
\[
S^1 \;\longrightarrow\; S^2 \;\longrightarrow\; S^1
\;\longrightarrow\; S^2 \;\longrightarrow\; S^1 \;\cdots
\]
along the input $(a^k b^\ell)^\omega$.

This behavior does not violate the no-entry cycle argument used
for $\sigma$-block forcing languages, because the transition
$V_b$ does not preserve a single invariant subspace. Instead it
moves the state between the two subspaces $S^1$ and $S^2$, allowing
the same pair of subspaces to be reused throughout the run.

Therefore there exist infinitely many indices $j$ corresponding to
positions inside the $b$-blocks such that
\[
\|P_{acc}(\chi_j)\| > 0 .
\]
Consequently the cumulative acceptance sequence
$(acc_j)_{j\ge1}$ increases and satisfies
\[
\lim_{j\to\infty}acc_j \ge p .
\]

From the periodic runs $(a^k b^\ell)^\omega$ we know that the automaton
repeatedly reaches configurations preceding $b$-segments from which
accepting amplitude is generated. Let
\[
T=\{\, i\in\mathbb{N} \mid
\text{the configuration at step } i \text{ precedes such a } b\text{-segment} \,\}.
\]
Since such configurations occur infinitely often in the run,
the set $T$ is nonempty. Let $2t=\min(T)$. Now consider the word $a^{2t}b^\omega$.
Let $\psi_i$ denote the pre-measurement states and
$\psi_i' = P_{\text{non}}\psi_i$ the corresponding non-halting states. The run on the prefix $a^{2t}$ evolves as
\[
\psi_0 \xrightarrow{a} \psi_1' \xrightarrow{a} \psi_2'
\xrightarrow{a} \cdots \xrightarrow{a} \psi_{2t}.
\]
After reading $a^{2t}$ the computation reaches a state
$
\psi = \psi_{2t} \in S_{\text{non}} .
$ We now let the system evolve on the infinite input $b^\omega$
starting from this state. Let $\psi_0''=\psi$ and define
\[
\psi_j'' = P_{\text{non}}(V_b(\psi_{j-1}'')), \qquad j\ge1,
\]
so that
\[
\psi \xrightarrow{b} \psi_1'' \xrightarrow{b} \psi_2''
\xrightarrow{b} \cdots .
\]

During the runs of $(a^k b^\ell)^\omega$ we observed that the
accepting amplitude arises from the $S^2$ component and that
the subspace $S^2$ is stable under the transition $V_b$.
Therefore repeated applications of $V_b$ on the infinite input
$b^\omega$ continue to generate accepting amplitude. Hence accepting projections occur infinitely often along the run,
and the cumulative acceptance probability again satisfies
$
\lim_{j\to\infty}acc_j \ge p .
$ For rejection, no amplitude leaks into $Q_{rej}$ during
this evolution, so the cumulative rejection mass remains bounded.
Thus $\lim_{j\to\infty}rej_j < p.$ Furthermore the Büchi condition holds since accepting projections occur infinitely often. Therefore $a^{2t} b^\omega \in L(M,p),$ which contradicts the assumption that $a^{2t} b^\omega \notin L(M,p)$. Hence such an automaton $M$ cannot exist.
\end{proof}
\hypertarget{complement}
\noindent\textbf{Theorem~\ref{complement}.}
\textit{
The class of languages recognized by MMQBA is not closed under complement.
}
\begin{proof}
We proceed by contradiction.  
Assume that the class of languages recognized by MMQBA is closed under complement.  
Then, for any MMQBA $M_1$ recognizing the language $L_a$, there exists an MMQBA $M$ recognizing the complement language $\overline{L_a}$. The limit language $L_a$ is defined as
\[
L_a = \lim \{ w \mid \text{odd number of } a\text{'s} \}.
\]
The complement language is then given by
\[
\overline{L_a} = \Sigma^\omega \setminus L_a .
\]

Since $M$ recognizes $\overline{L_a}$ under the MMQBA acceptance condition, any word accepted by $M$ must satisfy the following conditions:

1. infinitely many indices $j$ with $\|P_{acc}(\psi_j)\|^2>0$ (Büchi condition),  
2. $\lim_{j\to\infty}acc_j \ge p'$,  
3. $\lim_{j\to\infty}rej_j < p'$.

We construct integers $k_1,k_2,k_3,\ldots$ and the corresponding prefixes
\[
x_1 = aab^{k_1}, \qquad
x_2 = aab^{k_1}aab^{k_2}, \qquad
x_3 = aab^{k_1}aab^{k_2}aab^{k_3},
\]
which extend to an infinite word
$
x = aab^{k_1}aab^{k_2}aab^{k_3}\cdots \in \Sigma^\omega .
$

Since $M$ recognizes $\overline{L_a}$, there exists an infinite word
whose run satisfies the MMQBA acceptance conditions. From this run we
can choose integers $k_1,k_2,k_3,\ldots$ such that after reading each
block $aab^{k_i}$ the computation reaches an accepting configuration. Consequently the run of $M$ on $x$ contains infinitely many accepting visits and satisfies the MMQBA acceptance conditions. Hence $M$ accepts the infinite word $x$.

Now consider the prefixes of $x$ that end at the first $a$ of each block.
These prefixes are
$$
a,\quad
aab^{k_1}a,\quad
aab^{k_1}aab^{k_2}a,\quad \ldots
$$
The numbers of occurrences of $a$ in these prefixes are
$
1,\quad 3,\quad 5,\quad \ldots
$ respectively. Hence there are infinitely many prefixes of $x$ containing
an odd number of $a$'s, which implies that $x \in L_a$. Thus $M$ accepts a word that belongs to $L_a$, contradicting the
assumption that $M$ recognizes $\overline{L_a}$. Therefore the class of
languages recognized by MMQBA is not closed under complementation.
\end{proof}
\end{document}